\newtheorem{theorem}{Theorem}
\newtheorem{prop}{Proposition}
\newtheorem*{prop*}{Proposition}
\newtheorem{lemma}{Lemma}
\newtheorem{coro}{Corollary}
\theoremstyle{remark}
\newtheorem{exm}{Example}
\newtheorem{remark}{Remark}
\theoremstyle{definition}
\DeclareMathOperator{\states}{\mathfrak{S}}
\def\Ha{\mathcal{H}}
\def\dH{\dim({\mathcal{H}})}
\def\Ce{\mathcal{C}}
\DeclareMathOperator{\conv}{conv}
\DeclareMathOperator{\lin}{span}
\DeclareMathOperator{\degcom}{DegCom}
\DeclareMathOperator{\Tr}{Tr}
\DeclareMathOperator{\id}{id}
\def\I{\mathds{1}}
\def\tmin{\dot{\otimes}}
\def\tmax{\hat{\otimes}}
\def\treal{\tilde{\otimes}}
\def\<{\langle}
\def\>{\rangle}
\def\chsh{\text{CHSH}}
\def\diag{\text{diag}}
\def\off{\text{off}}
\def\inn{\text{in}}
\def\out{\text{out}}
\DeclareMathOperator{\swap}{SWAP}
\DeclareMathOperator{\Choi}{Choi}
\newcommand{\A}{\mathsf{A}}
\newcommand{\B}{\mathsf{B}}
\newcommand{\C}{\mathsf{C}}
\newcommand{\D}{\mathsf{D}}
\newcommand{\E}{\mathsf{E}}
\newcommand{\F}{\mathsf{F}}
\newcommand{\G}{\mathsf{G}}
\newcommand{\M}{\mathsf{M}}
\newcommand{\T}{\mathsf{T}}
\begin{document}
\title{Structure of quantum and classical implementations of Popescu-Rohrlich box}

\author{Anna Jen\v{c}ov\'{a}}
\affiliation{Mathematical Institute, Slovak Academy of Sciences, \v Stef\' anikova 49, Bratislava, Slovakia}

\author{Martin Pl\'{a}vala}
\email{martin.plavala@mat.savba.sk}
\affiliation{Mathematical Institute, Slovak Academy of Sciences, \v Stef\' anikova 49, Bratislava, Slovakia}
\affiliation{Naturwissenschaftlich-Technische Fakult\"{a}t, Universit\"{a}t Siegen, 57068 Siegen, Germany}

\begin{abstract}
We construct implementations of the PR-box using quantum and classical channels as state spaces. In both cases our constructions are very similar and they share the same idea taken from general probabilistic theories and the square state space model. We construct all quantum qubit channels that maximally violate a given CHSH inequality, we show that they all are entanglement-breaking channels, that they have certain block-diagonal structure and we present some examples of such channels.
\end{abstract}

\maketitle

\section{Introduction}
The Bell non-locality is a well known topic in quantum theory, yet we still lack full understanding of its implications. The research of Bell non-locality was inspired by the famous paradox of Einstein, Podolsky and Rosen \cite{EinsteinPodolskyRosen-paradox} that questioned the completeness of quantum theory.

As it was later shown by Bell \citep{Bell-ineq} the EPR paradox does not question the completeness of quantum theory, but it rather separates it from any other classical theory. This was demonstrated by the well-known Bell inequalities, that constrain any classical theory but are violated by quantum theory. Probably the most well known and most studied of Bell inequalities is the CHSH inequality \cite{ClauserHorneShimonyHolt-CHSH} that can be violated by quantum states and measurements, but this violation is constrained by the Tsirelson bound \cite{Cirelson-bound}. On one hand this shows that quantum theory is more non-local than any classical theory, on the other hand, Popescu and Rohrlich proved that the CHSH inequality may be violated even more by non-signaling correlations \cite{PopescuRohrlich-PRbox, Rohrlich-PRbox}. Since then, it is a long-standing question whether the Tsirelson bound does have any operational meaning for quantum theory.

It was later shown by Barrett that the conditional probability distribution identified in \cite{PopescuRohrlich-PRbox}, now often called the PR-box, can be generated by a non-signaling theory \cite{Barrett-GPTinformation}, most commonly known as the Boxworld GPT.  A realization of the PR-box in the real world would have several rather interesting implications, as shown in \cite{QuekShor-superquantChan, vanDam-communication, BuhrmanChristandlUngerWehnerWinter-crypt}, see also \cite{BrunnerCavalcantiPironioScaraniWehner-review} for a review.

It has been demonstrated several times that non-signaling classical and quantum channels provide realizations of the PR box \cite{BeckmanGottesmanNielsenPreskill-channels, HobanSainz-channels, PlavalaZiman-PRbox, Crepeau-CHSH}. As described in \cite{PlavalaZiman-PRbox}, such constructions can be put into the framework of general probabilistic theories (or GPT for short), in the following way: since the set of classical (or quantum) channels is compact and convex, it can be seen as a state space of some GPT. One can see non-signaling channels as elements of a joint state space, equivalent to entangled states in quantum theory, and one can describe measurement procedures and CHSH experiments within this theory, yielding maximal violation of the CHSH inequality.

One has to be careful when using channels in this way. The formalism implies that we are able to use the channel only once, hence one can exploit the input state incompatibility of measurements on channels \cite{SedlakReitznerChiribellaZiman-compatibility}. From a realistic view-point it is important to remember that using a non-local channel may take some time and resources, but the channel is considered to permit communication only if the channel is signaling.

In the present paper, we focus on the structure of the non-signaling channels such that the CHSH inequality is maximally violated for some choice of channel measurements. Such channels will be called the PR-channels. All the PR-channels obtained so far are in fact classical-to-classical and it is natural to ask whether there are some truly quantum non-signaling PR-channels. Another important question is the possibility of instantaneous implementation of such channels. Since PR-channels are non-signaling, their instantaneous implementation is not forbidden by special theory of relativity simply because no information is transferred, yet it is believed that such implementations do not exist.

We make a step toward addressing these questions, in that we present a characterization of the structure of all implementations of the PR-box in the framework of GPTs, especially for theories in which classical and quantum channels play the role of states. In any GPT, the pairs of measurements appearing in such implementations must be maximally incompatible and we show how the corresponding bipartite states are constructed from such pairs.

We apply the obtained results mainly to qubit bipartite non-signaling channels, where both parts of the input and output are qubit spaces. Here we give a full description of all possible pairs of maximally incompatible two-outcome channel measurements and of all qubit PR-channels. In particular, we prove that all these channels are necessarily entanglement-breaking. We believe that our results will bring more insight into the structure of PR-channels, in particular to the question of existence of their instantaneous implementation.

The article is organized as follows: in Sec. \ref{sec:GPT} we give a brief overview of general probabilistic theories as it will be used in later calculations. In Sec. \ref{sec:finding} we present the method of finding all bipartite non-signaling states that maximally violate the CHSH inequality and show that maximally incompatible measurements are necessary, the main result is stated as Thm. \ref{thm:finding-result}.  In Sec. \ref{sec:cPR}, we present the (known) PR-box implementation by classical channels in the light of  the results of Sec. \ref{sec:finding}. The main purpose of Section \ref{sec:qPR} is to introduce the GPT of quantum channels and show how the known PR-box implementations (and their slight generalizations) are obtained by our construction. In Sec. \ref{sec:qubit} we derive the structure of all qubit PR-channels. The main result is in Thm. \ref{thm:qubit-ETBchannel} where we prove that all qubit PR-channels must be entanglement-breaking and we provide some examples. Some more technical proofs can be found in the appendices.

\section{Overview of general probabilistic theories and tensor products} \label{sec:GPT}

General probabilistic theories (GPTs for short) provide a framework that uses operational axioms to describe various
possible physical theories. GPTs include the classical and the quantum theory, hence this setting allows us to compare
these two theories as well as to construct theories that are different from both. Among other things, GPTs provide a
framework to describe measurements of a physical systems in a general and mathematically clear way. We briefly introduce
the formalism below, which allows us to obtain general results applicable to various state spaces. Since it will be
sufficient for all of our calculations, we will only consider theories with finite dimensional state spaces. This
section is only intended to settle the notation that we use; for a full review of GPTs we refer the reader to
\cite{JanottaHinrichsen-GPTreview}. A nice introduction to GPTs, including a historical account, can be found in
\cite{lami2017nonclassical}.

The main idea behind GPTs is the following: both in (finite-dimensional) classical and quantum theory, the state space (i.e. the set of all preparation procedures) is a compact convex subset of a real finite-dimensional vector space. We will generalize both theories by assuming that the state space is some compact convex subset of a finite-dimensional vector space.

Any GPT can be described as a set of physical systems it contains. In operational terms, any system is determined by the set of allowed preparation procedures (states) and yes-no experiments (effects).

The set of states has a natural convex structure, because if $x$ and $y$ are preparable states of a system, then we can randomize the preparation to prepare the convex combination $\lambda x + (1-\lambda) y$, which we again postulate to be a viable state. One can also argue that the state space must be closed (in a for now unspecified topology) as if we can prepare a series of states $x_n$ converging to $x$, then we should be also able to prepare $x$ (albeit with infinite resources).

An effect is described by an affine function mapping states to probabilities of the "yes" outcome, represented by the interval $[0, 1]$. Recall that a function $f$ is affine if it respects the convex structure:  $f(\lambda x + (1-\lambda) y ) = \lambda f(x) + (1-\lambda) f(y)$, for any states $x$ and $y$ and any $\lambda\in [0,1]$.  We will require that for any two distinct states there is an effect that distinguished these states strictly better than a random guess. This implies that the state space must be bounded. Indeed, if the state space would have a direction of recession in which it would go to infinity (i.e. it would be unbounded), any effect would have to be zero in this direction, so we would be unable to distinguish these states.

We also require the state space to be embedded in a real finite-dimensional vector space equipped with the Euclidean topology. This last assumption is only practical as it allows us to stick only to rather simple mathematics. Note that it is well known that any bounded and closed subset of a finite-dimensional real vector space is compact, hence the state space is compact.

\subsection{Structure of GPTs}

As we have seen above, the basic mathematical framework for GPT consists of compact convex sets in finite dimensional Euclidean spaces and affine functions on them. Below, we introduce the notations and further assumptions, applied throughout the paper.

Let $V$ be a finite dimensional real vector space with the standard Euclidean topology and let $K \subset V$ be a
compact convex set, interpreted as the state space of a system in a GPT. Let $A(K)$ denote the linear space of real-valued affine functions on $K$. We will denote constant functions by the value they attain. Let $f, g \in A(K)$, then we introduce an ordering to $A(K)$ as follows: $f \geq g$ if and only if for every $x \in K$ we have $f(x) \geq g(x)$. Let $A(K)^+ = \{ f \in A(K): f \geq 0 \}$ denote the convex, closed, generating, pointed cone of positive functions and let $E(K) = \{f \in A(K): 0 \leq f \leq 1 \}$ denote the set of effects on $K$, called the effect algebra. The effect algebra $E(K)$ is important in GPTs since (as will be explained below) the effects describe two-outcome measurements of the theory.

Denote $A(K)^*$ the dual of $A(K)$ and denote $A(K)^{*+}$ the positive cone dual to $A(K)^+$, i.e.
\begin{equation*}
A(K)^{*+} = \{ \psi' \in A(K)^*: \psi'(f) \geq 0, \forall f \in A(K)^+\}.
\end{equation*}
The cone $A(K)^{*+}$ gives rise to an ordering on $A(K)^*$: let $\psi, \varphi \in A(K)^*$, then $\psi \geq \varphi$ if
and only if $\psi - \varphi \in A(K)^{*+}$, i.e. if $\psi - \varphi \geq 0$. The state space $K$ is affinely isomorphic
to the subset $\{ \psi' \in A(K)^{*+}: \psi'(1) = 1 \}$, see \cite[Chapter 1,
Theorem 4.3]{AsimowEllis}. Note that by definition, we consider all states to be normalized, which is expressed by the condition 
$\psi(1) = 1$.  For simplicity we will omit the above isomorphism and  treat $K$ as a subset of $A(K)^*$.

It follows that any $0 \ne \psi\in A(K)^{*+}$ can be expressed uniquely as $\psi=\alpha x$ for $\alpha>0$ and $x\in K$. The cone $A(K)^{*+}$ is generating, so we can express every $\psi \in A(K)^*$ as $\psi = \alpha x - \beta y$ for some $x, y \in K$ and $\alpha, \beta \in \mathbb{R}$, $\alpha, \beta \geq 0$.

Now we will present a simple definition of a two-outcome measurement in GPTs. Generally speaking, a measurement is a procedure that assigns probabilities to possible measurement outcomes. For simplicity we will restrict ourselves to outcomes labeled by the numbers $-1$, $1$. Let $\A$ be such a measurement and let $P_x(\epsilon | \A)$ denote the probability of obtaining the outcome labeled as $\epsilon\in \{-1,1\}$ when we measure a system in the state $x \in K$. Since the outcome probabilities must respect probabilistic mixtures, the map $x\mapsto P_x(1 | \A)$ is an effect and $x\mapsto P_x(-1 | \A) = 1 - P_x(1 | \A)$ is an effect as well. Although in general the measurements in the theory may be restricted, in this work we will assume that any effect $f\in E(K)$ gives rise to such a measurement, determined for $x\in K$ as
\begin{align*}
P_x(1 | \A) &= f(x), \\ P_x(-1 | \A) &= (1 - f)(x) = 1 - f(x).
\end{align*}
Such an assumption is called the No-Restriction hypothesis \cite{ChiribellaDArianoPerinotti-GPTpurification}, see also \cite{FilippovGudderHeinosaariLeppajarvi-restrictions} for a recent treatment.

Measurements with a finite number of outcomes are similarly described by collections of effects $f_i\in E(K)$, such that $\sum_i f_i=1$. For a more general treatment of measurements see e.g. \cite[Section 2.2]{Holevo-QT}.

We now list some examples of state spaces, and relations between them, that are basic for the present work.

\begin{exm}[Classical bit]\label{exm:clbit}
The simplest (nontrivial) example is the 1-dimensional simplex $\states_C$. Let the extreme points (deterministic states) in $\states_C$ be denoted by $s_0,s_1$, then $\states_C=\conv(s_0,s_1)$. Since all affine functionals are fully determined by their values at $s_0,s_1$, we have $A(\states_C)\simeq\mathbb R^2$ and the set of effects is identified with $E(\states_C)\simeq [0,1]^2$. Let $\pi \in E(\states_C)$ be the effect determined by $\pi(s_0)=0$, $\pi(s_1)=1$, then $\pi:\states_C\to [0,1] $ is an affine isomorphism.
\end{exm}

\begin{exm}[Quantum state space]\label{exm:quantum}
Let $\Ha$ be a finite dimensional complex Hilbert space. Let $B_h(\Ha)$ denote the set of self-adjoint operators on $\Ha$ and let $\states_{\Ha} = \{ \rho \in B_h(\Ha): \rho \geq 0, \Tr(\rho) = 1 \}$ denote the set of density operators (or states) on $\Ha$, where $\rho \geq 0$ means that $\rho$ is positive semi-definite and $\Tr(\rho)$ denotes the trace of $\rho$.  Let $\I$ denote the identity operator. Then $A(\states_\Ha)\simeq B_h(\Ha)$ and $E(\states_\Ha)\simeq E(\Ha)$, where $E(\Ha)=\{ 0\le E\le \I, E\in B(\Ha)\}$ is the set of quantum effects. The quantum measurements will be thus described by collections $E_1,\dots, E_n$ of positive operators such that $\sum_i E_i=\I$, such a collection is called a POVM \cite{HeinosaariZiman-MLQT}. We will mostly treat the qubit case, that is $\dH=2$.
\end{exm}

\begin{exm}[Square state space]\label{exm:square}
This state space is also called the gbit and appears as the state space of systems in a theory called GNST introduced in \cite{Barrett-GPTinformation}, now also known as the Boxworld GPT. This theory is one of the most commonly used examples of a GPT other than classical or quantum theories. The square $S$ is a state space with four extreme points $s_{00}, s_{10}, s_{01}, s_{11}$, such that
\begin{equation*}
\dfrac{1}{2} (s_{00} + s_{11} ) = \dfrac{1}{2} ( s_{10} + s_{01} ).
\end{equation*}
Let $\pi_0, \pi_{1} \in E(S)$ be given, for $i, j \in \{0, 1\}$, as
\begin{align*}
&\pi_0(s_{ij}) = i,
&&\pi_{1}(s_{ij}) = j,
\end{align*}
then $A(S) = \lin( \{\pi_0, \pi_{1}, 1 \})$ and $E(S) = \conv (\{ \pi_0, 1-\pi_0, \pi_{1}, 1-\pi_{1}, 0, 1 \})$. Moreover, we can see that $\{s_{00}, s_{10}-s_{00},s_{01}-s_{00}\}$ forms a linear basis of $A(S)^*$, which is dual to $\{\pi_0, \pi_{1}, 1 \}$. Note also that the map
\begin{equation*}
s \mapsto (\pi_0(s),\pi_{1}(s)), \qquad s\in S
\end{equation*}
is an affine isomorphism of $S$ onto $[0,1]^2\simeq \states_C\times \states_C$, its inverse is given by
\begin{equation*}
(\lambda,\mu)\mapsto s_{00}+\lambda(s_{10}-s_{00})+\mu(s_{01}-s_{00}),\qquad \lambda,\mu\in [0,1].
\end{equation*}
\end{exm}

The notations $\pi$, $\pi_0$, $\pi_1$ introduced in the above examples will be kept throughout.

\begin{exm}[Classical bit channels]\label{exm:clchannels}
A channel $\Phi: \states_C \to \states_C$ is defined as an affine map of $\states_C$ into itself; a simple example of a channel is the identity channel $\id: \states_C \to \states_C$. Let $\Ce_C$ be the set of all channels (on the classical bit), then since $\pi\circ \Phi$ is an effect on $\states_C$ for every $\Phi\in \Ce_C$ and $\pi$ is an isomorphism, we see that
\begin{equation*}
\Ce_C\simeq E(\states_C)\simeq [0,1]^2\simeq S.
\end{equation*}
Through this isomorphism, we establish a relation between classical bit channels and the Boxworld GPT.

The effect $\pi_0$ ($\pi_{1}$) on $S$ corresponds to the effect in $E(\Ce_C)$, determined as $\Phi\mapsto \pi(\Phi(s_0))$ ($\Phi\mapsto \pi(\Phi(s_1))$). More generally, each effect in $E(\Ce_C)$ is a convex combination of effects of the form
\begin{equation*}
F_{t_i,f_i}(\Phi):=f_i(\Phi(t_i)),
\end{equation*}
where $t_i\in \states_C$ and $f_i\in E(\states_C)$ (clearly, we may restrict to the extreme points $t_i\in \{s_0,s_1\}$).  The corresponding measurement can be seen as a protocol where we  choose the state $t_i$ with some probability $\lambda_i$, input it into the measured channel $\Phi$ and apply the effect $f_i$ to the output of the channel. For such a measurement $\A$ and a channel $\Phi \in \Ce_C$, we have
\begin{equation*}
P_\Phi (1| \A) = \sum_i \lambda_i P_{\Phi(t_i)} (1 | \M_i),
\end{equation*}
where $\M_i$ is the measurement given by $f_i$, see also Fig. \ref{fig:GPT-channel-measurement}.
\end{exm}

\begin{figure}
\includegraphics[width=.8\linewidth]{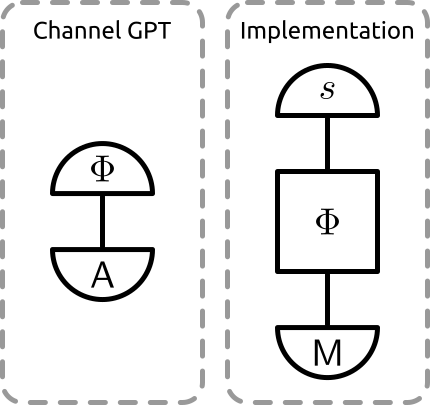}
\caption{Two equivalent ways of describing the same measurement $\A$ on the channel $\Phi \in \Ce_C$. On the left we have the viewpoint of the GPT from example \ref{exm:clchannels} where channels are considered as states and the channel is measured. On the right hand side we have the implementation of the measurement, where $s \in \states_C$ is used as an input into the channel $\Phi$ and the outcome, which is $\Phi(s) \in \states_C$ is measured by $\M$. \label{fig:GPT-channel-measurement}}
\end{figure}

Note that we can obtain similar relations for finite dimensional simplices, their products and channels between them. The (for us) most important example of quantum channels will be described later in Section \ref{sec:qPR}.

\subsection{Incompatible measurements in GPT}

Let $\A$, $\A'$ be two-outcome measurements on $K$ given by the effects $f_\A, f_{\A'} \in E(K)$ respectively, then we say that they are compatible \cite{StevensBusch-CHSHIncomp, Banik-steering, SedlakReitznerChiribellaZiman-compatibility, HeinosaariMiyaderaZiman-compatibility} if there is a four-outcome measurement $\G$ with outcomes $(-1, -1)$, $(-1, 1)$, $(1, -1)$, $(1, 1)$ such that for all $x \in K$ we have
\begin{align*}
P_x(1 | \A) &= P_x((1, 1) | \G) + P_x((1, -1) | \G), \\ P_x(1 | \A') &= P_x((1, 1) | \G) + P_x((-1, 1) | \G).
\end{align*}
Equivalently, measurements $\A$ and $\A'$ are compatible if and only if there is $p \in E(K)$ such that
\begin{align*}
f_\A &\geq p \\ f_{\A'} &\geq p \\ 1 + p &\geq f_\A + f_{\A'}
\end{align*}
see \cite{Plavala-simplex} for a proof. This definition of compatibility of measurements generalizes to GPTs the well-known notion of compatibility (or joint measurability) of POVMs in quantum theory, which itself generalizes the notion of commutativity of projective measures, see \cite{HeinosaariMiyaderaZiman-compatibility} for a review.

The degree of compatibility of $\A$ and $\A'$ is defined as
\begin{align*}
\degcom(\A, \A') = \sup_{\T, \T'} \{ \lambda \in [0, 1]: \lambda \A + (1-\lambda \T), \\
\lambda \A' + (1-\lambda \T') \text{ are compatible} \},
\end{align*}
where $\T$ and $\T'$ are trivial or coin-toss measurements determined by constant effects $\mu 1,\mu'1$ for $\mu,\mu'\in [0,1]$. It is known that we always have $\degcom(\A, \A') \geq \frac{1}{2}$ \cite{BuschHeinosaariSchultzStevens-compatibility} and we say that the measurements $\A$ and $\A'$ are maximally incompatible if $\degcom(\A, \A') = \frac{1}{2}$.

By \cite{JencovaPlavala-maxInc}, the effects $f_\A, f_{\A'} \in E(K)$ correspond to maximally incompatible measurements $\A, \A'$ if and only if there are four points $x_{00}, x_{10}, x_{01}, x_{11} \in K$ satisfying
\begin{equation}
\label{eq:witnessquare1}
\frac{1}{2}( x_{00} + x_{11}) = \frac{1}{2}( x_{10} + x_{01} )
\end{equation}
and such that
\begin{align}
\label{eq:witnessquare2}
f_\A(x_{00}) = f_\A(x_{01}) = f_{\A'}(x_{00}) = f_{\A'}(x_{10}) &= 0, \\ f_\A(x_{10}) = f_\A(x_{11}) = f_{\A'}(x_{01}) = f_{\A'}(x_{11}) &= 1.\label{eq:witnessquare3}
\end{align}
We will call such a set of points $x_{ij}$ a witness square for $f_\A, f_{\A'}$, or equivalently for $\A, \A'$.

\begin{exm}
\label{exm:maxinc_square}
Let $\A_0$ denote the two-outcome measurement on the square determined by effect $\pi_0$ and let $\A_{1}$ be determined by the effect $\pi_{1}$, see Example \ref{exm:square}. It is immediate that $\A_0$ and $\A_{1}$ are maximally incompatible, the witness square being formed by the extreme points $\{s_{ij}\}$. The existence of maximally incompatible measurements on $S$ was first observed in \cite{BuschHeinosaariSchultzStevens-compatibility}.
\end{exm}

We now present an equivalent characterization of maximally incompatible measurements which shows that all such pairs are, in some sense, isomorphic to the pair $\A_0,\A_{1}$ on $S$. This characterization was proved in a more general form in \cite[Corollary 5]{Jencova-incompatible}, we include a short proof for the convenience of the reader.

\begin{prop}
\label{prop:GPT-condMaxInc}
The measurements $\A$, $\A'$ on the state space $K$ corresponding to the effects $f_\A$, $f_{\A'} \in E(K)$ respectively are maximally incompatible if and only if there are affine maps $\iota: S \to K$ and $\Pi: K \to S$ such that $\Pi \circ \iota = \id$ and for $i, j \in \{0, 1 \}$ we have
\begin{align}
f_\A (\iota(s_{ij})) &= \pi_0 (s_{ij}), \label{eq:GPT-comdMaxInc-1} \\ f_{\A'} (\iota(s_{ij})) &= \pi_{1} (s_{ij}). \label{eq:GPT-comdMaxInc-2}
\end{align}
\end{prop}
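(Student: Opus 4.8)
The plan is to reduce both implications to the witness square characterization of maximal incompatibility from \cite{JencovaPlavala-maxInc} that is recalled in \eqref{eq:witnessquare1}--\eqref{eq:witnessquare3}. The key observation is that the pair of maps $(\iota,\Pi)$ is essentially a repackaging of a witness square: the map $\iota$ encodes the four points $x_{ij}$ as the images of the extreme points $s_{ij}$, while $\Pi$ records the two effects $f_\A,f_{\A'}$ as the coordinate functions of the square via the affine isomorphism $S\simeq[0,1]^2$ of Example \ref{exm:square}.

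For the direction ($\Leftarrow$), I would assume $\iota,\Pi$ are given and set $x_{ij}:=\iota(s_{ij})$. Since $\iota$ is affine, applying it to the defining relation $\tfrac12(s_{00}+s_{11})=\tfrac12(s_{10}+s_{01})$ of $S$ immediately yields \eqref{eq:witnessquare1}. Conditions \eqref{eq:GPT-comdMaxInc-1}--\eqref{eq:GPT-comdMaxInc-2}, combined with $\pi_0(s_{ij})=i$ and $\pi_1(s_{ij})=j$, are exactly the effect values \eqref{eq:witnessquare2}--\eqref{eq:witnessquare3}. Hence $\{x_{ij}\}$ is a witness square and $\A,\A'$ are maximally incompatible. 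Note that $\Pi$ is not used here; only the existence of $\iota$ matters for this implication.

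For the direction ($\Rightarrow$), I would start from a witness square $\{x_{ij}\}$ guaranteed by maximal incompatibility. I define $\iota$ as the affine extension $s_{ij}\mapsto x_{ij}$, most cleanly through the explicit parametrization in Example \ref{exm:square}, by $\iota(s_{00}+\lambda(s_{10}-s_{00})+\mu(s_{01}-s_{00})):=x_{00}+\lambda(x_{10}-x_{00})+\mu(x_{01}-x_{00})$; conditions \eqref{eq:GPT-comdMaxInc-1}--\eqref{eq:GPT-comdMaxInc-2} then read off directly from \eqref{eq:witnessquare2}--\eqref{eq:witnessquare3}. For $\Pi$ I would use the natural candidate built from the two effects themselves: via $S\simeq[0,1]^2$, set $\Pi(x):=s_{00}+f_\A(x)(s_{10}-s_{00})+f_{\A'}(x)(s_{01}-s_{00})$. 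Because $f_\A,f_{\A'}$ are effects and hence $[0,1]$-valued, $\Pi$ maps $K$ affinely into $S$. To verify $\Pi\circ\iota=\id$ I would compute $f_\A(\iota(s))$ and $f_{\A'}(\iota(s))$ for $s=s_{00}+\lambda(s_{10}-s_{00})+\mu(s_{01}-s_{00})$; the witness square values collapse these to $\lambda$ and $\mu$, which are precisely $\pi_0(s),\pi_1(s)$, so $\Pi(\iota(s))=s$.

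The only step requiring genuine care is the well-definedness of $\iota$: the four points $s_{ij}$ are affinely dependent, satisfying the single relation defining $S$, so the prescribed images must respect that relation. This is exactly what \eqref{eq:witnessquare1} supplies, and using the coordinate parametrization of $S$ sidesteps the issue by defining $\iota$ through the three affinely independent points $s_{00},s_{10},s_{01}$ and then checking $\iota(s_{11})=x_{11}$ via \eqref{eq:witnessquare1}. Everything else is routine bookkeeping with affine functions, so I expect this consistency check to be the main (and essentially only) obstacle.
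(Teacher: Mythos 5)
Your proposal is correct and takes essentially the same approach as the paper: both directions reduce to the witness-square characterization \eqref{eq:witnessquare1}--\eqref{eq:witnessquare3}, and your constructions of $\iota$ and $\Pi$ coincide exactly with the paper's \eqref{eq:ext_iot} and \eqref{eq:ext_pi}. The only difference is one of explicitness -- you spell out the well-definedness of $\iota$ and verify $\Pi \circ \iota = \id$ on general points of $S$ via its coordinate parametrization, whereas the paper checks this only on the vertices and leaves the (routine) affine extension implicit.
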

\begin{proof}
Assume that the maps with said properties exist, then the points $\iota(s_{ij}) \in K$ form a witness square which implies that the measurements are maximally incompatible.

If the measurements are maximally incompatible, then let $x_{ij} \in K$, $i, j \in \{0, 1\}$, be the corresponding witness square. Now define the maps $\iota: S \to K$ and $\Pi: K \to S$ as follows: $\iota$ is determined by the property that the images of the vertices of $S$ form the witness square:
\begin{equation}
\label{eq:ext_iot}
\iota(s_{ij}) = x_{ij}
\end{equation}
and for $x \in K$ we put
\begin{equation}
\label{eq:ext_pi}
\Pi(x) = f_\A (x) (s_{10} - s_{00}) + f_{\A'} (x) (s_{01} - s_{00}) + s_{00}.
\end{equation}
We have
\begin{equation*}
\Pi(\iota(s_{ij})) = \Pi(x_{ij}) = s_{ij}
\end{equation*}
and so $\Pi \circ \iota = \id$ follows as well. Note that Eq. \eqref{eq:GPT-comdMaxInc-1} and \eqref{eq:GPT-comdMaxInc-2} are satisfied simply because $\iota(s_{ij}) = x_{ij}$.
\end{proof}
The maps in Prop. \ref{prop:GPT-condMaxInc} can be uniquely extended to positive maps $\iota: A(S)^* \to A(K)^*$, $\Pi:
A(K)^* \to A(S)^*$, that is, linear maps preserving the positive cones. Note also that the map $\Pi$ is uniquely determined by the given pair of measurements while $\iota$ is given by the choice of the witness square, which may be non-unique.

\begin{coro}
\label{coro:maxinc}
Let $\A$, $\A'$ and $\B$, $\B'$ be two pairs of maximally incompatible measurements on a state space $K$ defined by the effects $f_\A, f_{\A'}, f_\B, f_{\B'}$. Let $\Pi_\A$, $\iota_\A$ and $\Pi_\B$, $\iota_\B$ be the corresponding maps. Then there is an affine map $T:K\to K$, such that $f_\B=f_\A\circ T$, $f_{\B'}=f_{\A'}\circ T$ and
\begin{equation*}
\iota_\A=T\circ \iota_\B,\ \Pi_\B=\Pi_\A\circ T.
\end{equation*}
\end{coro}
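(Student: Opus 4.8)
The plan is to exhibit $T$ explicitly as the composition $T := \iota_\A \circ \Pi_\B$, which is an affine map $K \to K$ since $\Pi_\B : K \to S$ and $\iota_\A : S \to K$ are affine. All four claimed identities then follow from two elementary observations about the maps supplied by Prop.~\ref{prop:GPT-condMaxInc}.

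First I would record that, as affine functions on $S$, one has $f_\A \circ \iota_\A = \pi_0$ and $f_{\A'} \circ \iota_\A = \pi_{1}$. Indeed, by \eqref{eq:GPT-comdMaxInc-1}--\eqref{eq:GPT-comdMaxInc-2} these affine functions agree on the four vertices $s_{ij}$, and since two affine functions agreeing on a set agree on its affine hull---which here contains $S = \conv(\{s_{ij}\})$---the equalities hold on all of $S$. Dually, from the explicit form \eqref{eq:ext_pi} of $\Pi_\B$ and the values $\pi_0(s_{10}-s_{00})=1$, $\pi_0(s_{01}-s_{00})=0$, $\pi_0(s_{00})=0$ (and symmetrically for $\pi_{1}$), I get $\pi_0 \circ \Pi_\B = f_\B$ and $\pi_{1} \circ \Pi_\B = f_{\B'}$.

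With these in hand the verification is a direct chase of compositions. For the effects, $f_\A \circ T = (f_\A \circ \iota_\A) \circ \Pi_\B = \pi_0 \circ \Pi_\B = f_\B$, and likewise $f_{\A'} \circ T = \pi_{1} \circ \Pi_\B = f_{\B'}$. For the intertwining of the embeddings, $T \circ \iota_\B = \iota_\A \circ (\Pi_\B \circ \iota_\B) = \iota_\A$, using $\Pi_\B \circ \iota_\B = \id$; and for the projections, $\Pi_\A \circ T = (\Pi_\A \circ \iota_\A) \circ \Pi_\B = \Pi_\B$, using $\Pi_\A \circ \iota_\A = \id$. This gives all four assertions.

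I do not expect a genuine obstacle here: the statement is essentially the functoriality of the correspondence in Prop.~\ref{prop:GPT-condMaxInc}, and the candidate $T$ is forced once one reads the two pairs as different ``coordinate charts'' $S \hookrightarrow K$ of the same witness-square structure. The only point requiring a word of care is the promotion of the vertex equalities \eqref{eq:GPT-comdMaxInc-1}--\eqref{eq:GPT-comdMaxInc-2} to identities of affine functions on all of $S$, which is where the affine (rather than merely pointwise) character of the maps is used; note also that $T$ need not be unique, reflecting the non-uniqueness of the witness square and hence of $\iota_\A$.
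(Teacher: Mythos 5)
Your proposal is correct and coincides with the paper's proof: the paper also sets $T=\iota_\A\circ\Pi_\B$ and states that the properties are checked straightforwardly, which are exactly the composition identities you verify. Your filled-in details (promoting the vertex equalities \eqref{eq:GPT-comdMaxInc-1}--\eqref{eq:GPT-comdMaxInc-2} to all of $S$ by affinity, and reading $\pi_0\circ\Pi_\B=f_\B$, $\pi_1\circ\Pi_\B=f_{\B'}$ off \eqref{eq:ext_pi}) are precisely the intended "straightforward" checks.
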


\begin{proof}
Put $T=\iota_\A\circ\Pi_\B$, all the properties are checked straightforwardly.
\end{proof}

\begin{remark}
\label{rem:maxinc_square}
Note that the above results imply that $\{\A_0, \A_1\}$ is, up to affine isomorphisms, the unique maximally incompatible pair of two-outcome measurements on $S$.
\end{remark}

\subsection{Tensor products and bipartite systems in GPTs}

To describe the state spaces of composite systems in GPTs, we need the notion of a tensor product of the state spaces. For simplicity, we will only consider the tensor product of a state space $K$ with itself.

There are several ways to define the tensor product of compact convex sets, but there is a minimal and a maximal one. All of the possible tensor products are compact convex subsets in the tensor product $A(K)^* \otimes A(K)^*$.

The minimal tensor product, denoted by $K \tmin K$, is the convex hull of the points of the form $x \otimes y$ for $x,y \in K$, i.e.
\begin{equation*}
K \tmin K = \conv( \{ x \otimes y: x, y \in K \}).
\end{equation*}
In other words, this is the smallest composite state space containing all locally prepared states. The maximal tensor product, denoted $K \tmax K$, is the state space of all non-signaling states, that is
\begin{align*}
K \tmax K = \{& \psi \in A(K)^* \otimes A(K)^*: \psi(f \otimes g) \geq 0 \\ &\forall f, g \in A(K)^+, (1\otimes 1) (\psi) = 1 \}.
\end{align*}
To give more insight into these definitions, we will look at state spaces of quantum theory (Example \ref{exm:quantum}). In this case, $\states_\Ha \tmin \states_\Ha$ is the set of all separable states and $\states_\Ha \tmax \states_\Ha$ is the set of all (normalized) entanglement witnesses \cite[Definition 6.38]{HeinosaariZiman-MLQT}.

Every GPT in which we want to describe bipartite systems must come equipped with a composition rule on how to form the joint state space of two (or more) systems. We will denote the joint state space as $K \treal K$ and it will represent the set of all bipartite states for the given system. Note that $K \treal K$ does not have any general definition as it is specified by the theory we are working with. Some properties of the composition rule are sometimes imposed (such as it is given by a symmetric monoidal structure), but here we only require that it is a state space such that
\begin{equation*}
K \tmin K \subseteq K \treal K \subseteq K \tmax K.
\end{equation*}

In quantum theory, the joint state space is $\states_\Ha \treal \states_\Ha = \states_{\Ha \otimes \Ha}$ so that all of the above inclusions are strict. On the other hand, for the classical bit (or any simplex) we have $\states_C \tmin \states_C=\states_C\tmax \states_C$, so that the joint state space is unique, denoted by $\states_C\otimes \states_C$. It can be easily seen that $\states_C\otimes \states_C$ can be identified with the 3-dimensional simplex, with vertices labeled by $\{(i,j),\ i,j=0,1\}$. In the Boxworld GPT, the composition rule is the maximal tensor product, so that $S\treal S=S\tmax S$. Here the minimal tensor product $S\tmin S$ is quite different from $S\tmax S$ (the GPT with the composition rule $S\treal S=S\tmin S$ is called GLT in \cite{Barrett-GPTinformation}).

Consider a state space $K$ and let $K\treal K$ be the joint state space. Let $\A$, $\B$ be two-outcome measurements on $K$ given by effects $f_A,f_B\in E(K)$ and let $\A \otimes \B$ denote the four-outcome measurement obtained by measuring $\A$ on the first part and $\B$ on the other. This measurement is determined by the effects $\{f_\A\otimes f_\B,(1-f_\A)\otimes f_\B, f_\A\otimes (1-f_\B), (1-f_\A)\otimes(1-f_\B)\}$ and it is clear that $K\tmax K$ is the largest state space such that all such locally prepared measurements are valid. Note also that for any $f \in E(K)$ and $x \in K \tmax K$ one can define $(f \otimes \id)(x) \in A(K)^{*+}$ as the unique functional such that for any other $g \in E(K)$ we have
\begin{equation*}
((f \otimes \id)(\phi))(g) = (f \otimes g)(\phi).
\end{equation*}
In particular, $(1\otimes id)(x)$ and $(id\otimes 1)(x)$ belong to $K$ and define the two marginals of $x$ which correspond to partial traces in quantum theory.

\subsection{Review of the CHSH inequality}
We provide a very short introduction to the CHSH inequality, in the setting of non-local boxes. These are defined as a black box, with two inputs with possible values $\A,\A'$ and $\B,\B'$, respectively, and two outputs, each with values 1 or -1. It is assumed that such a box describes the situation when two experimenters (Alice and Bob), each on their part of a bipartite system, apply one of a given pair of two-outcome measurements $\{\A,\A'\}$ resp. $\{\B,\B'\}$.

Any non-local box $x$ is fully described by the outcome probabilities $P_x(\epsilon, \eta | \C, \D)$, with $\C=\A$ or $\A'$, $\D=\B$ or $\B'$ and $\epsilon,\eta\in \{-1,1\}$. Assume that the measurements $\A$ and $\B$ are chosen. Then Alice will see the outcome $1$ on her part with the probability $P_x(1,1 | \A, \B) + P_x(1, -1 | \A, \B)$. For this to be a well-defined marginal outcome probability of $\A$, we require that it stays the same if the other measurement is $\B'$, that is,
\begin{equation}
\begin{aligned}
P_x (1, 1 | \A, \B) + P_x (1, -1 | \A, \B) = \\ = P_x (1, 1 | \A, \B') + P_x (1, -1 | \A, \B').
\end{aligned}
\label{eq:gpt-nonsignalingA}
\end{equation}
This condition, together with the analogical condition
\begin{align}
\begin{aligned}
P_x (1, 1 | \A, \B) + P_x (-1, 1 | \A, \B) = \\ = P_x (1, 1 | \A', \B) + P_x (-1, 1 | \A', \B)
\end{aligned}
\label{eq:gpt-nonsignalingB}
\end{align}
are called the non-signaling conditions \cite{Cirelson-bound, PopescuRohrlich-PRbox}, because they mean that neither side can signal to the other by only using different local measurements and without announcing the outcome of the measurement. Non-local boxes satisfying these conditions, also called non-signaling boxes, are of particular interest in the theory of Bell inequalities, see also \cite{BrunnerCavalcantiPironioScaraniWehner-review}.

It is clear that if $K$ is a state space in a GPT, then any $x\in K\treal K$ and any measurements $\{\A,\A'\}$ of the first part and $\{\B,\B'\}$ on the other implement a non-signaling box.

The central quantity for the formulation of the CHSH inequality is the correlation $E(\A,\B)$ between the measurements $\A$ and $\B$, defined as
\begin{align*}
E(\A,\B) &= P_x(1, 1 |\A, \B) - P_x(1, -1 | \A, \B) \\ &- P_x(-1, 1 | \A, \B) + P_x(-1, -1 | \A, \B).
\end{align*}
It is straightforward that we have $-1 \leq E(\A, \B) \leq 1$. For the two pairs of measurements $\A$, $\A'$ and $\B$, $\B'$, the CHSH quantity $X_\chsh$ is given as
\begin{equation*}
X_\chsh = E(\A, \B) + E(\A, \B') + E(\A', \B) - E(\A', \B') .
\end{equation*}
It is known that in classical theories we have $|X_\chsh| \leq 2$, this is called the the CHSH inequality. This inequality is violated in quantum theory, where the Tsirelson bound \cite{Cirelson-bound} gives $|X_\chsh| \leq 2 \sqrt{2}$. The maximal value reachable by a non-signaling theory coincides with the algebraic maximum, which is $|X_\chsh| = 4$.

The non-signaling box attaining the value $X_\chsh=4$ was defined by Popescu and Rohrlich \cite{PopescuRohrlich-PRbox}, it is called the PR-box. The PR-box is determined by the outcome probabilities
\begin{equation}
\label{eq:square_prob}
P_x(\epsilon,\eta| \C, \D) =
\begin{dcases}
\dfrac{1}{2} &
\begin{aligned}
&\text{if } \C\D \ne \A'\B' \text{ and } \epsilon \eta = 1 \\
&\text{or } \C\D=\A'\B' \text{ and } \epsilon \eta = -1
\end{aligned}
\\
0 & \text{otherwise}.
\end{dcases}
\end{equation}
It can be seen that the value $X_\chsh=-4$ is obtained from the PR-box by relabelling the outcomes on one of the sides
and that this is the only other possibility for maximal CHSH violation. We provide some more details in Appendix \ref{sec:PRbox} for the convenience of the reader.

\section{Implementations of the PR-box} \label{sec:finding}
The aim of this section is to characterize all implementations of the PR-box in the GPT framework, that is, for a state space $K$, we want to describe the states $\phi\in K\treal K$ and pairs of two-outcome measurements $\{\A,\A'\}$ and $\{\B,\B'\}$ such that the  corresponding outcome probabilities maximally violate the CHSH inequality.

The best known implementation of the PR-box is provided in the Boxworld GPT. It was shown in \cite{Barrett-GPTinformation} that all non-signaling boxes can be implemented with elements of the state space $S\tmax S$ and the outcome probabilities are obtained by applying the measurements given by $\A=\B=\A_0$ and $\A'=\B'=\A_{1}$. The state in $S\tmax S$ corresponding to the PR-box is given by
\begin{equation}
\label{eq:square_PR}
\phi_{S}=\frac12((s_{00}-s_{10})\otimes s_{00}+s_{11}\otimes s_{10}+s_{10}\otimes s_{01}),
\end{equation}
up to local isomorphisms, this is the only implementation of the PR-box on $S$, see Appendix \ref{sec:square} for a proof.

It is known that for some systems the degree of compatibility of measurements is tied to violation of the CHSH
inequality \cite{WolfPerezgarciaFernandez-measIncomp, StevensBusch-CHSHIncomp} and, as we have seen in Prop.
\ref{prop:GPT-condMaxInc}, all maximally incompatible pairs are obtained from the square state space by embedding it
into other state spaces. It is therefore not surprising that all implementations of the PR-box are obtained from the above implementation on $S$.

\begin{theorem}
\label{thm:finding-result}
Let $K$ be a state space and let $\{\A,\A'\}$, $\{\B,\B'\}$ and $\phi \in K \treal K$ be an implementation of the PR-box. Then both pairs $\{\A, \A'\}$ and $\{\B,\B'\}$ are maximally incompatible. Moreover, let $\iota_\A$, $\Pi_\A$ and $\iota_\B$, $\Pi_\B$ be the maps for these pairs given by Prop. \ref{prop:GPT-condMaxInc}. Then
\begin{equation*}
\phi = (\iota_\A \otimes \iota_\B)(\phi_S) + \phi^\perp
\end{equation*}
where $\phi_S \in S \tmax S$ is given by \eqref{eq:square_PR} and $\phi^\perp\in \ker(\Pi_\A)\otimes \ker(\Pi_\B)$.
\end{theorem}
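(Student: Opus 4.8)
The plan is to read maximal violation as a family of vanishing outcome probabilities and to recognize the resulting \emph{conditional functionals} (obtained by feeding one effect into one half of $\phi$) as normalized vertices of witness squares, in the sense of \eqref{eq:witnessquare1}--\eqref{eq:witnessquare3}. Since $\phi$ implements the PR-box, \eqref{eq:square_prob} forces $E(\A,\B)=E(\A,\B')=E(\A',\B)=1$ and $E(\A',\B')=-1$; equivalently the eight ``wrong-outcome'' probabilities vanish, e.g. $(f_\A\otimes(1-f_\B))(\phi)=((1-f_\A)\otimes f_\B)(\phi)=0$ and the six analogues for $(\A,\B'),(\A',\B),(\A',\B')$. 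The box also has uniform marginals, so $(f_\A\otimes 1)(\phi)=(1\otimes f_\B)(\phi)=\tfrac12$ and likewise for the primed effects.

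For the first assertion I would set $\phi_b=(\id\otimes f_\B)(\phi)$, $\phi_{\bar b}=(\id\otimes(1-f_\B))(\phi)$, and analogously $\phi_{b'},\phi_{\bar b'}$; these lie in $A(K)^{*+}$ because $\phi$ is non-signaling. The vanishing equations say precisely that $f_\A$ and $f_{\A'}$ attain $0$ or the full norm on each of these functionals, and the marginal computation makes every norm equal $\tfrac12$. Hence $x_{00}=2\phi_{\bar b}$, $x_{10}=2\phi_{b'}$, $x_{01}=2\phi_{\bar b'}$, $x_{11}=2\phi_b$ satisfy \eqref{eq:witnessquare2}--\eqref{eq:witnessquare3}, while $\phi_{\bar b}+\phi_b=\phi_{b'}+\phi_{\bar b'}$ (both equal Alice's marginal) gives \eqref{eq:witnessquare1}; this witness square shows $\{\A,\A'\}$ is maximally incompatible, and the symmetric construction with $\psi_a=(f_\A\otimes\id)(\phi)$ etc.\ does the same for $\{\B,\B'\}$ and produces Bob's witness square $x^B_{ij}$. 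I take $\iota_\A,\iota_\B$ to be the embeddings \eqref{eq:ext_iot} onto these witness squares and $\Pi_\A,\Pi_\B$ the maps \eqref{eq:ext_pi}; in particular $\psi_a=\tfrac12\iota_\B(s_{11})$, $\psi_{a'}=\tfrac12\iota_\B(s_{10})$ and $(1\otimes\id)(\phi)=\tfrac12\iota_\B(s_{00}+s_{11})$.

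The core of the decomposition is the one-sided identity $(\Pi_\A\otimes\id)(\phi)=(\id\otimes\iota_\B)(\phi_S)$ in $A(S)^*\otimes A(K)^*$. To obtain it I would expand the left side in the basis $\{s_{10}-s_{00},s_{01}-s_{00},s_{00}\}$ dual to $\{\pi_0,\pi_1,1\}$; using $\pi_0\circ\Pi_\A=f_\A$ and $\pi_1\circ\Pi_\A=f_{\A'}$, its components are $\psi_a,\psi_{a'},(1\otimes\id)(\phi)$, which by the identities above equal $\tfrac12\iota_\B(s_{11}),\tfrac12\iota_\B(s_{10}),\tfrac12\iota_\B(s_{00}+s_{11})$. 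Factoring $\iota_\B$ out reduces the claim to a finite identity in $A(S)^*\otimes A(S)^*$ checked on the product basis $\{\pi_0,\pi_1,1\}^{\otimes 2}$ against \eqref{eq:square_PR}. The mirror identity $(\id\otimes\Pi_\B)(\phi)=(\iota_\A\otimes\id)(\phi_S)$ is proved the same way.

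I then set $\phi^\perp=\phi-(\iota_\A\otimes\iota_\B)(\phi_S)$. Applying $\Pi_\A\otimes\id$ and using $\Pi_\A\circ\iota_\A=\id$ with the one-sided identity gives $(\Pi_\A\otimes\id)(\phi^\perp)=0$; as $\iota_\A\Pi_\A$ is a projection with kernel $\ker(\Pi_\A)$, this puts $\phi^\perp\in\ker(\Pi_\A)\otimes A(K)^*$, and symmetrically $\phi^\perp\in A(K)^*\otimes\ker(\Pi_\B)$, so the elementary identity $(U\otimes A(K)^*)\cap(A(K)^*\otimes W)=U\otimes W$ yields $\phi^\perp\in\ker(\Pi_\A)\otimes\ker(\Pi_\B)$. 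I expect the main obstacle to be the normalization in the first part: verifying that the conditional functionals are nonzero with norm exactly $\tfrac12$ and that the maps $\iota_\A,\iota_\B$ in the statement are those induced by the witness squares built from $\phi$ (other choices differ by an affine isomorphism, Cor.~\ref{coro:maxinc}). Once this identification is in place, the one-sided identity and the decomposition are bookkeeping, the only new inputs being the finite check against \eqref{eq:square_PR} and the tensor-intersection fact.
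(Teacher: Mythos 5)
Your proposal is correct and takes essentially the same route as the paper's proof: both build witness squares out of the conditional functionals $2(f\otimes\id)(\phi)$ and $2(\id\otimes g)(\phi)$ to establish maximal incompatibility, and both reduce the decomposition to the one-sided identity $(\Pi_\A\otimes\id)(\phi)=(\id\otimes\iota_\B)(\phi_S)$ combined with the kernel-tensor argument for $\phi^\perp$. The only cosmetic difference is that the paper also identifies $(\Pi_\A\otimes\Pi_\B)(\phi)=\phi_S$ by invoking uniqueness of the PR-box implementation on $S$, whereas you obtain everything directly from the one-sided identities, which is a harmless streamlining.
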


\begin{proof}
The proof is in Appendix \ref{sec:proof}.
\end{proof}

We see from this result that maximally incompatible pairs of measurements are necessary for maximal CHSH violation, moreover, having such pairs of measurements on both sides, we can construct all possible states. While existence of such measurements is a property of $K$, we need also a joint state space $K\treal K$ that contains at least one of the candidate states. It follows from the next result that if we work with the maximal tensor product, maximal incompatibility is also sufficient for existence of an implementation of the PR-box.

\begin{prop}
\label{prop:square-max}
Let $K$ be a state space on which there exists a pair of maximally incompatible measurements and let $\iota:S\to K$ be the corresponding map. Then
\begin{equation*}
(\iota\otimes \iota)(\phi_S)\in K \tmax K.
\end{equation*}
\end{prop}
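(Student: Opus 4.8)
The plan is to verify directly that $(\iota\otimes\iota)(\phi_S)$ satisfies the two defining conditions of the maximal tensor product: nonnegativity against every product effect $f\otimes g$ with $f,g\in A(K)^+$, and the normalization $(1\otimes 1)((\iota\otimes\iota)(\phi_S))=1$. The structural observation driving the argument is that the positivity test defining $\tmax$ factorizes over the two tensor factors, and this product structure is exactly matched by the product structure of the map $\iota\otimes\iota$. Throughout I would use that $\iota\colon A(S)^*\to A(K)^*$ is the positive linear extension noted after Prop.~\ref{prop:GPT-condMaxInc}, and that $\phi_S\in S\tmax S$ by its construction in \eqref{eq:square_PR}.

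First I would pass to the adjoint map $\iota^*\colon A(K)\to A(S)$, acting on affine functions by $\iota^*f=f\circ\iota$. The computation then rests on the transpose identity
\[
((\iota\otimes\iota)(\phi_S))(f\otimes g)=\phi_S((\iota^*f)\otimes(\iota^*g)),
\]
which I would establish by checking it on elementary tensors $\psi_1\otimes\psi_2\in A(S)^*\otimes A(S)^*$, where both sides equal $(\iota^*f)(\psi_1)\cdot(\iota^*g)(\psi_2)$, and extending by bilinearity.

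Next I would observe that positivity of $\iota$ dualizes to positivity of $\iota^*$: for $f\in A(K)^+$ and any $\psi\in A(S)^{*+}$ we have $(\iota^*f)(\psi)=f(\iota(\psi))\ge 0$, since $\iota(\psi)\in A(K)^{*+}$ and $f\ge 0$ on that cone, whence $\iota^*f\in A(S)^+$, and likewise $\iota^*g\in A(S)^+$. Because $\phi_S\in S\tmax S$ is nonnegative on all product effects over $S$, the right-hand side of the transpose identity is $\ge 0$, giving the positivity condition. For normalization I would use that $\iota$ sends normalized states to normalized states, so $\iota^*1=1$; setting $f=g=1$ in the transpose identity yields $(1\otimes 1)((\iota\otimes\iota)(\phi_S))=\phi_S(1\otimes 1)=1$.

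The argument is short, and what I would flag as the crux rather than a genuine obstacle is that it relies essentially on testing positivity against \emph{product} effects only. This is precisely what $\tmax$ asks for, and it is what lets the evaluation factor through $\iota^*\otimes\iota^*$; the same strategy would fail for an intermediate composition rule whose positivity cone is not characterized by product effects, since $\iota\otimes\iota$ need not carry the full positive cone of $A(S)^*\otimes A(S)^*$ into that of $A(K)^*\otimes A(K)^*$. This is also why the statement is phrased for the maximal tensor product rather than for an arbitrary $\treal$.
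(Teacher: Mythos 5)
Your proof is correct and follows essentially the same route as the paper's: both pass to the adjoint map $\iota^*:A(K)\to A(S)$, note that positivity of $\iota$ dualizes to positivity of $\iota^*$, and use the identity $(f\otimes g)((\iota\otimes\iota)(\phi_S))=((\iota^*f)\otimes(\iota^*g))(\phi_S)\ge 0$ together with $\phi_S\in S\tmax S$. The only difference is cosmetic: you spell out the elementary-tensor verification of the transpose identity and the normalization $(1\otimes 1)((\iota\otimes\iota)(\phi_S))=1$, which the paper leaves implicit.
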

\begin{proof}
Since $\iota$ extends to a positive map $A(S)^*\to A(K)^*$ with the respective positive cones, the assertion follows by the fact that positive maps are completely positive for the maximal tensor product. In more details: note that the adjoint map $\iota^*:A(K)\to A(S)$, given by \[ \iota^*(f)(s)= f(\iota(s)), \quad s\in S,\ f\in A(K) \] is again positive. Hence, for any $h_1,h_2 \in A(K)^+$ we have
\begin{equation*}
(h_1 \otimes h_2)( (\iota \otimes \iota)( \phi_S ) ) = ((\iota^* h_1) \otimes (\iota^* h_2))( \phi_S ) \geq 0,
\end{equation*}
as $\phi_S \in S \tmax S$ and $\iota^* h_i \in A(S)^+$. It follows that $(\iota \otimes \iota)( \phi_S ) \in K \tmax K$.
\end{proof}

We finish this section by observing that the element $(\iota\otimes \iota)(\phi_S)$, corresponding to the special case $\phi^\perp=0$ in Thm. \ref{thm:finding-result}, can be constructed from a witness square $\{x_{ij}\}$ of the maximally incompatible pair of measurements as \[ (\iota\otimes \iota)(\phi_S)= \dfrac{1}{2} ( ( x_{00} - x_{10} ) \otimes x_{00} + x_{11} \otimes x_{10} + x_{10} \otimes x_{01} ). \] This is immediate from \eqref{eq:ext_iot} and \eqref{eq:square_PR}. It is clear that we can construct a similar state for different measurements on each side, using some respective witness squares.

\section{Classical channels as implementations of the PR-box} \label{sec:cPR}

In this section we discuss known results, showing that non-signaling classical channels implement the PR-box.

As we have seen in Example \ref{exm:clchannels}, the set of classical bit channels is isomorphic to the square state space. By a similar reasoning as in the proof of Proposition \ref{prop:square-max}, we can extend this to an isomorphism  of the maximal tensor products $\Ce_C\tmax \Ce_C$ and $S\tmax S$. In this way, we may view the Boxworld as a GPT based on classical channels. It this section, we will describe the PR-box implementation \eqref{eq:square_PR} in this setting.

It is immediate from the definition that the non-local boxes can be identified with classical bipartite channels $\states_C\otimes \states_C\to \states_C\otimes\states_C$, where the four vertices of $\states_C\otimes\states_C$ are labeled by elements of $\{\A,\A'\}\times\{\B,\B'\}$ in the input space and by elements of $\{1,-1\}^2$ in the output. Since $S\tmax S$ is identified with the non-signaling boxes, we see from the above remarks that $\Ce_C\tmax \Ce_C$ can be described as the set of non-signaling classical bipartite channels.

As shown in Example \ref{exm:clchannels}, the measurements on $\Ce_C$ corresponding to $\A_0$, $\A_1$ on $S$ are given by the effects $F_{s_0,\pi}$ and $F_{s_1,\pi}$, respectively (where $\pi$ is as in Example \ref{exm:clbit}).  Note also that the corresponding maps $\iota$ and $\Pi$ are precisely the isomorphism $S\to \Ce_C$ and its inverse. To construct the bipartite channel $\Phi_C$ corresponding to the state $\phi_S$ of \eqref{eq:square_PR}, we first find the elements of $\Ce_C$ corresponding to the vertices of $S$. Looking at Example \ref{exm:clchannels}, the channels $\Phi_{ij}\simeq s_{ij}$ are determined by
\begin{align*}
&\pi(\Phi_{ij}(s_0))=i,
&&\pi(\Phi_{ij}(s_1))=j,
\end{align*}
hence $\Phi_{00}=1_{\states_C}(\cdot)s_0$ and $\Phi_{11}=1_{\states_C}(\cdot)s_1$ are constant channels, $\Phi_{01}=id$ and $\Phi_{10}$ is the negation channel, given by $s_0\mapsto s_1$ and $s_1\mapsto s_0$. It is now easily checked that $\Phi_C$ is given by
\begin{align}
\Phi_C(s_0\otimes s_0)&=\Phi_C(s_0\otimes s_1)=\Phi_C(s_1\otimes s_0)\notag\\
&=\frac12(s_0\otimes s_0+s_1\otimes s_1)\label{eq:clchannel_PRsym}\\
\Phi_C(s_1\otimes s_1)&=\frac12(s_0\otimes s_1+s_1\otimes s_0)\label{eq:clchannel_PRanti}
\end{align}
It can be also checked directly that this channel, together with the above pair of measurements applied on both sides, maximally violates the CHSH inequality. The protocol is depicted on Fig. \ref{fig:cPR-protocol}. As in the case of $S$, this is the only implementation of the PR-box in this setting, up to local isomorphisms.

\begin{figure}
\includegraphics[width=.8\linewidth]{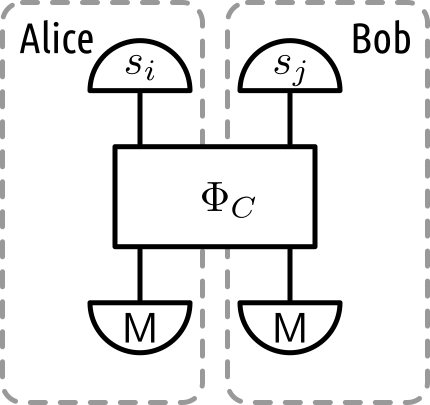}
\caption{Alice and Bob can use the channel $\Phi_C$ to maximally violate the CHSH inequality as follows: Alice inputs the state $s_i$ and Bob inputs the state $s_j$ where $i, j \in \{ 0, 1 \}$. This choice exactly corresponds to the choice of channel measurement $\A=\B$ or $\A'=\B'$ on either side. Finally both Alice and Bob apply the measurement $\M$ corresponding to the effect $\pi \in E(\states_C)$. Note that for both Alice and Bob the procedure of inputting a testing state into the channel and measuring the outcome comprises a single measurement on the bipartite channel itself. Also note that both Alice and Bob can only access their respective input and output and since the channel $\Phi_C$ is non-signaling they can not obtain any information about each others choices of channel measurement. \label{fig:cPR-protocol}}
\end{figure}

\section{Quantum channels as implementations of the PR-box} \label{sec:qPR}

We now get to the most important example of a state space for this work. Let $\Ha$ be a finite dimensional Hilbert space and let $\Ce(\Ha)$ denote the set of all quantum channels on $B(\Ha)$, that is, all completely positive and trace preserving linear maps $B_h(\Ha) \to B_h(\Ha)$. Let $\{ |i\> \}_{i=1}^{\dH}$ be an orthonormal basis of $\Ha$, then $|\psi^+_{\dH}\> = \sum_{i=1}^{\dH} |i\> \otimes |i\> \in \Ha \otimes \Ha$ is a multiple of the maximally entangled state. The Choi matrix of a channel $\Phi$ is given as $C(\Phi) = (\Phi \otimes id)(|\psi^+_{\dH}\>\<\psi^+_{\dH}|)\in B(\Ha\otimes \Ha)$. The set $\Ce(\Ha)$ is isomorphic to the set of Choi matrices
\begin{equation*}
\Choi(\Ha): = \{ A \in B_h(\Ha \otimes \Ha): A \geq 0, \Tr_1(A) = \I \}
\end{equation*}
where $\Tr_1$ is the partial trace over the first Hilbert space. Clearly, $\Ce(\Ha)$ is a compact convex subset of the finite dimensional real vector space of linear maps on $B_h(\Ha)$ and can therefore be treated as a state space in some GPT.  The following description of the quantum channel GPT is based on the ideas of \cite{ChiribellaDArianoPerinotti-PPOVM}.

The measurements of quantum channels are in principle similar to measurements of classical channels described in Example \ref{exm:clchannels}. Clearly we can input any quantum state into the channel in question and measure the outcome, but this does not describe all of the possible measurements on quantum channels for a simple reason: the input state can also be entangled to other system. It turns out that all measurements on quantum channels are described as procedures where we input a bipartite and potentially entangled state $\rho \in B_h(\Ha \otimes \Ha')$ into the channel (possibly tensored with identity) and measure the outcome. The measurement on channels given by $\rho$ and a quantum measurement described by the POVM $E_1, \ldots, E_n$ is determined by the outcome probabilities
\begin{equation*}
P_\Phi (i | \E) = \Tr( (\Phi \otimes \id)(\rho) E_i )
\end{equation*}
for $i \in \{1, \ldots, n \}$ and any quantum channel $\Phi: B_h(\Ha) \to B_h(\Ha)$. Any effect on channels has the form
\begin{equation*}
F_{\rho, E}(\Phi) :=\Tr( (\Phi\otimes \id)(\rho)E )
\end{equation*}
for some effect $E$ and input state $\rho$. Note also that this expression is not unique. In particular, we can always assume the state $\rho$ to be pure as we can always purify it by enlarging the Hilbert space $\Ha'$.

We can describe the measurements in an equivalent way, using the Choi matrices. Then there are operators $F_1, \ldots, F_n \in B_h(\Ha \otimes \Ha)$ such that
\begin{equation*}
P_\Phi (i | \E) = \Tr( C(\Phi) F_i )
\end{equation*}
for $i \in \{1, \ldots, n \}$. One can show that we may always choose $F_i \geq 0$ for all $i \in \{1, \ldots, n \}$ and we must have $\sum_{i=1}^n F_i = \I \otimes \sigma$ where $\sigma \in \states_\Ha$. Such a collection of operators $F_1, \ldots, F_n$ is called a process POVM, or PPOVM, or a quantum tester, which were first introduced in \cite{Ziman-ppovm, ChiribellaDArianoPerinotti-PPOVM}. This passage from quantum channels to Choi matrices is directly related to viewing the quantum channels in the GPT picture, see also Fig. \ref{fig:qc}.

\begin{figure}[ht]
\centering
\includegraphics[angle=270]{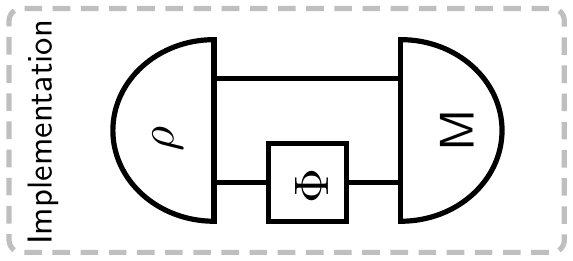}
\includegraphics[angle=270]{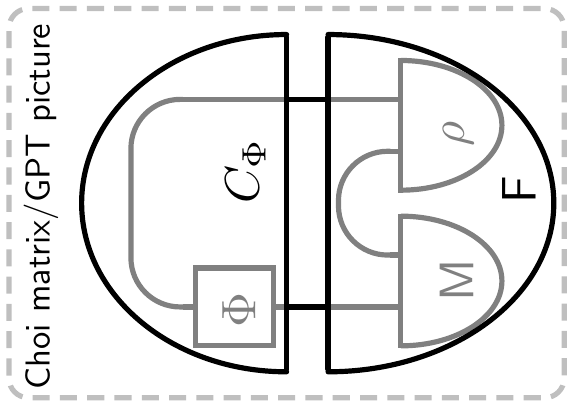}
\caption{
Desciption of a measurement on quantum channels. On the left, we have an implementation of the measurement with an input state $\rho$ of the system composed with an ancilla and the final measurement $\M$, which is a usual quantum measurement described by a POVM. Note that it is not necessary to consider convex combinations, since these can be included into the ancilla. The right side shows the same measurement represented by the PPOVM $\F$ applied on the Choi matrix $C_\Phi$, obtained from $\Phi$ by applying the channel to (one part) of the state $|\psi^+_{\dH}\>$. This is depicted as the ''bending of the input wire'' of $\Phi$. Similarly, the PPOVM $\F$ is the \emph{link product} of $\rho$ and $\M$ \cite{ChiribellaDArianoPerinotti-PPOVM} which is obtained by application of $|\psi^+_{\dH}\>$ on parts of the inputs, as shown in the picture. The representation of $\Phi$ by its Choi matrix $C_\Phi$ can be straightforwardly interpreted as  viewing the set of channels in the GPT picture.  \label{fig:qc}}
\end{figure}

We now need to specify the joint state space $\Ce(\Ha) \treal \Ce(\Ha)$. This will be defined as the set of completely positive maps in $\Ce(\Ha) \tmax \Ce(\Ha)$, which is precisely the set $\Ce^{ns}(\Ha\otimes\Ha)$ of bipartite non-signaling (or causal) channels, as defined in \cite{BeckmanGottesmanNielsenPreskill-channels}. On the set of Choi matrices, it can be seen that the set of positive elements in $\Choi(\Ha)\tmax \Choi(\Ha)$ is isomorphic to the set of Choi matrices of elements in $\Ce^{ns}(\Ha\otimes\Ha)$, via the map
\begin{equation*}
\id_1\otimes \swap_{23}\otimes \id_4: B(\Ha^{\otimes 4})\to B(\Ha^{\otimes 4}),
\end{equation*}
where $\swap$ denotes the swap gate, that is for $X, Y \in B(\Ha)$ we have
\begin{equation*}
\swap(X \otimes Y) = Y \otimes X.
\end{equation*}

We next describe a pair of maximally incompatible two-outcome measurements on $\Ce(\Ha)$. Let $\rho_1, \rho_2, \sigma_1, \sigma_2 \in \states_\Ha$, such that $\rho_1 \rho_2 = \sigma_1 \sigma_2 = 0$. Then there exist projections $M, N$, such that $M \rho_1 = \rho_1$, $M \rho_2 = 0$, $N \sigma_1 = \sigma_1$, $N \sigma_2 = 0$. Also let us denote $M^\perp = \I - M$, $N^\perp = \I - N$. Let the two-outcome measurements $\C_1$, $\C_2$ be given by the effects $F_{\sigma_1, M}$ and $F_{\sigma_2,M}$. Consider the channels $B_h(\Ha) \to B_h(\Ha)$ given for $X \in B_h(\Ha)$ as
\begin{align*}
\Phi_{00} (X) &= \Tr(X) \rho_2, \\
\Phi_{10} (X) &= \Tr(N X) \rho_1 + \Tr( N^\perp X) \rho_2, \\
\Phi_{01} (X) &= \Tr(N X) \rho_2 + \Tr( N^\perp X) \rho_1, \\
\Phi_{11} (X) &= \Tr(X) \rho_1
\end{align*}
It is straightforward to verify that the measurements $\C_1$, $\C_2$ are maximally incompatible and the channels $\Phi_{00}$, $\Phi_{10}$, $\Phi_{01}$, $\Phi_{11}$ form a witness square for $\C_1$, $\C_2$. Let $\iota: S\to \Ce(\Ha)$ be the map \eqref{eq:ext_iot} for this witness square.

Let $\Phi=(\iota\otimes \iota)(\Phi_S)$ be the tensor product element as described in Prop. \ref{prop:square-max}. Let us denote $\rho_{cor} = \frac{1}{2}( \rho_1 \otimes \rho_1 + \rho_2 \otimes \rho_2 )$ and $\rho_{ac} = \frac{1}{2}( \rho_1 \otimes \rho_2 + \rho_2 \otimes \rho_1 )$. One can check that for $X, Y \in B_h(\Ha)$, we have
\begin{align*}
\Phi (X \otimes Y) &= \dfrac{1}{2} \big( ( \Phi_{00}(X) - \Phi_{10}(X) ) \otimes \Phi_{00} (Y) \\
&+ \Phi_{11}(X) \otimes \Phi_{10}(Y) + \Phi_{10}(X) \otimes \Phi_{01}(Y) \big) \\
&= \Tr( (N^\perp \otimes N^\perp) (X \otimes Y) ) \rho_{cor} \\
&+ \Tr( (\I \otimes \I - N^\perp \otimes N^\perp) (X \otimes Y) ) \rho_{ac}.
\end{align*}
By linearity, for every $\rho \in S_{\Ha \otimes \Ha}$ we have
\begin{align}
\Phi(\rho) &= \Tr( (N^\perp \otimes N^\perp) (\rho) ) \rho_{ac} \nonumber \\
&+ \Tr( (\I \otimes \I - N^\perp \otimes N^\perp) (\rho) ) \rho_{cor}. \label{eq:qPR-channel}
\end{align}
It is easy to see that this is indeed a quantum channel, so that we have $\Phi\in \Ce^{ns}(\Ha\otimes \Ha)=C(\Ha)\treal C(\Ha)$. It follows by Thm. \ref{thm:finding-result}, but is also straightforward to verify, that the channel $\Phi$ and the measurements $\A=\B=\C_1$ and $\A'=\B'=\C_2$ are an implementation of the PR-box. The protocol is similar to the one used in the case of classical channels: Alice and Bob share the bipartite channel $\Phi$ and they both can choose the input state $\sigma_1$ or $\sigma_2$ each on their part of the channel, followed by the measurement $\{M, M^\perp\}$ on their part of the output.

\begin{exm}
\label{exm:PhiS} The above pair of maximally incompatible measurements is a generalization of an example studied in
\cite{SedlakReitznerChiribellaZiman-compatibility, JencovaPlavala-maxInc}, with $\dH=2$ and
$N=M=\rho_1=\sigma_1=|0\>\<0|$ and $N^\perp=M^\perp=\rho_2=\sigma_2=|1\>\<1|$. The corresponding channel
has the form
\begin{align*}
\Phi(\rho) &= \<11|\rho|11\>\frac12(|01\>\<01|+|10\>\<10|) \\
&+ (1-\<11|\rho|11\>)\frac12(|00\>\<00|+|11\>\<11|).
\end{align*}
The resulting implementation of the PR-box was already observed in \cite{BeckmanGottesmanNielsenPreskill-channels, HobanSainz-channels, PlavalaZiman-PRbox}. Note that under the identification $S\simeq \Ce_C$, the map $\iota$ becomes the inclusion of $\Ce_C$ onto the set of classical-to-classical qubit channels determined by $|i\>\<i|\mapsto \sum_j P(j|i)|j\>\<j|$ for conditional propabilities $P(j|i)$, while $\Pi$ is a projection of $\Ce(\Ha)$ onto this set.

\end{exm}

As the results of Thm. \ref{thm:finding-result} suggest, even if the pair of maximally incompatible measurements is fixed, there can be more bipartite non-signaling channels that implement the PR-box. Any such channel will be called a PR-channel.

Observe that the PR-channel $\Phi$ given by \eqref{eq:qPR-channel} is of a special form, called an entanglement-breaking channel. An entanglement-breaking channel is such that for any state $\omega \in \states_{\Ha \otimes \Ha \otimes \Ha \otimes \Ha}$ we have that $(\Phi \otimes \id)(\omega)$ is a separable state. Such channels are also called measure-and-prepare, since we first perform a measurement, in this case the two-outcome measurement given by the effect $N^\perp\otimes N^\perp$, and according to the result, we prepare one of a given set of states, in this case $\rho_{ac}$ or $\rho_{cor}$.

The structure of the channel $\Phi$ is even more simple. In fact, since $N$ is a projection and the two states $\rho_{ac}$ and  $\rho_{cor}$ commute, $\Phi$ is classical-to-classical. Moreover, note that both the measurement and the target states are separable. Even so, the resulting channel implements a PR-box.

Now we would like to see whether there are PR-channels of a more complicated structure. It would be quite hard to characterize all such channels in the general case. For this reason, in the next section we restrict to $\dH = 2$, i.e. to qubits.

\section{The special case of qubits} \label{sec:qubit}
In this section we restrict to qubits, i.e. $\dH = 2$ and we characterize all PR-channels that exist in this scenario. So let $\Ha$ be a complex Hilbert space, $\dH = 2$ and let $\{ |0\>, |1\> \}$ be an orthonormal basis of $\Ha$. For vectors from $\Ha \otimes \Ha$, we will use the shorthand $|i\> \otimes |j\> = |ij\>$ for $i, j \in \{0, 1\}$. We will also use the notation $\oplus$ for addition modulo 2.

We will begin by characterizing all maximally incompatible pairs of two-outcome measurements and their witness squares. Let the two measurements be given by effects $F_{\rho,M}$ and $F_{\sigma,N}$, for pure states $\rho, \sigma \in \states_{\Ha \otimes \Ha'}$ and $M,N\in E(\Ha\otimes \Ha')$. Let $\ker(M)$ denote the projection onto the kernel of $M$. Then by \eqref{eq:witnessquare2} and \eqref{eq:witnessquare3}, a witness square $\Phi_{00}, \Phi_{10}, \Phi_{01}, \Phi_{11} \in \Ce(\Ha)$ must satisfy
\begin{align*}
\ker(M) (\Phi_{10} \otimes \id)(\rho) \ker(M) &= 0, \\
\ker(M) (\Phi_{11} \otimes \id)(\rho) \ker(M) &= 0,
\end{align*}
and
\begin{align*}
\ker(M) (\Phi_{00} \otimes \id)(\rho) \ker(M) &= (\Phi_{00} \otimes \id)(\rho), \\
\ker(M) (\Phi_{01} \otimes \id)(\rho) \ker(M) &= (\Phi_{01} \otimes \id)(\rho).
\end{align*}
From \eqref{eq:witnessquare1} it follows that
\begin{equation*}
(\Phi_{00} \otimes \id)(\rho) + (\Phi_{11} \otimes \id)(\rho) = (\Phi_{10} \otimes \id)(\rho) + (\Phi_{01} \otimes \id)(\rho),
\end{equation*}
and after applying $\ker(M)$ we get
\begin{equation}
(\Phi_{00} \otimes \id)(\rho) = (\Phi_{01} \otimes \id)(\rho). \label{eq:qubit-rhoKerM}
\end{equation}
It follows that $\rho$ cannot have maximal Schmidt rank as then \eqref{eq:qubit-rhoKerM} would imply $\Phi_{00} = \Phi_{01}$ which is impossible by \eqref{eq:witnessquare2} and \eqref{eq:witnessquare3}. Since we have assumed $\dH = 2$, it follows that $\rho$ must have Schmidt rank $1$, i.e. $\rho$ must be a pure product state, so we can assume $\rho = |x \> \< x |$ for some $|x \> \in \Ha$, $\Vert x \Vert = 1$, and $M \in B_h(\Ha)$, since $\rho$ is not entangled and therefore we do not need the ancillary Hilbert space $\Ha'$. From
\begin{equation*}
\Tr( \Phi_{10}(|x \> \< x |) M ) = \Tr( \Phi_{11}(|x \> \< x |) M ) = 1
\end{equation*}
and
\begin{equation*}
\Tr( \Phi_{00}(|x \> \< x |) M ) = \Tr( \Phi_{01}(|x \> \< x |) M ) = 0
\end{equation*}
it follows that there must be an orthonormal basis $|\xi_0\>, |\xi_1\>$ of $\Ha$ such that
\begin{align*}
&M = | \xi_0 \> \< \xi_0 |,
&&M^\perp = | \xi_1 \> \< \xi_1 |,
\end{align*}
and for $i, j \in \{0, 1 \}$ we have
\begin{equation*}
\Phi_{ij}(|x \> \< x |) = | \xi_{i\oplus 1} \> \< \xi_{i\oplus 1} |.
\end{equation*}
In a similar fashion, one can show that we must have $\sigma = |y \> \< y |$ for some $| y \> \in \Ha$, $\Vert y \Vert = 1$, and that there is an orthonormal basis $| \eta_0 \>, |\eta_1 \>$ of $\Ha$ such that
\begin{align*}
&N = | \eta_0 \> \< \eta_0 |,
&&N^\perp = | \eta_1 \> \< \eta_1 |,
\end{align*}
and for $i, j \in \{0, 1 \}$ we have
\begin{equation*}
\Phi_{ij}(|y \> \< y |) = | \eta_{j\oplus 1} \> \< \eta_{j\oplus 1} |.
\end{equation*}

Let $L\in B(\Ha)$ be given by $L|0\>=|x\>$, $L|1\>=|y\>$ and let $\Phi^L_{ij}=\Phi_{ij}(L\cdot L^*)$. Then $\Phi^L_{ij}$ are completely positive maps satisfying \eqref{eq:witnessquare1}, with Choi matrices
\begin{equation*}
C(\Phi^L_{ij}) =
\begin{pmatrix}
|\xi_{i\oplus 1}\>\<\xi_{i\oplus 1}| & X_{ij} \\
X_{ij}^* & |\eta_{j\oplus 1}\>\<\eta_{j\oplus 1}|
\end{pmatrix}
\end{equation*}
where $X_{ij} \in B(\Ha)$. It follows by positivity of $C(\Phi^L_{ij})$ that we must have
\begin{equation*}
X_{(i\oplus 1)(j\oplus 1)} = z_{ij} |\xi_i\>\<\eta_{j}|
\end{equation*}
where $z_{ij}\in \mathbb C$, $|z_{ij}|\le 1$, see \cite[1.3.2 Proposition]{Bhatia-PDM}. From Eq. \eqref{eq:witnessquare1} we get
\begin{equation*}
z_{00} |\xi_0 \>\<\eta_0| + z_{11} |\xi_1 \>\<\eta_1| = z_{10} |\xi_1 \>\<\eta_0| + z_{01} |\xi_0 \>\<\eta_1|
\end{equation*}
from which it follows that $z_{ij} = 0$ for all $i,j \in \{0, 1\}$. This implies that $C(\Phi^L_{ij})$ are block-diagonal matrices. In particular,
\begin{equation*}
0=\Tr (\Phi^L_{ij}(|0\>\<1|))=\Tr \Phi_{ij}(|x\>\<y|) = \<x|y\>,
\end{equation*}
so that $\{|x\>,|y\>\}$ is an orthonormal basis of $\Ha$ and we have proved that $C(\Phi_{ij})$ are block-diagonal in this basis. Applying unitary transformations to the input resp. output space, transforming the basis $|x\>,|y\>$ resp. $|\xi_0\>,|\xi_1\>$ to $|0\>,|1\>$, we can summarize as follows.

\begin{prop}
\label{prop:max_inc} Let $\A$, $\A'$ be two-outcome measurements on qubit channels, given by PPOVMs
$\{F_\A,F_\A^\perp\}$ and $\{F_{\A'},F_{\A'}^\perp\}$. Then $\A$, $\A'$ are
maximally incompatible if and only if, up to unitary conjugation on the input and output spaces,
\begin{align*}
&F_\A = |00\>\<00|,
&&F_\A^\perp=|10\>\<10| \\
&F_{\A'}=|\eta_0\>\<\eta_0|\otimes |1\>\<1|,
&&F_{\A'}^\perp=|\eta_1\>\<\eta_1|\otimes |1\>\<1|
\end{align*}
where $\{|\eta_0,|\eta_1\>\}$ is an orthonormal basis of $\Ha$. Moreover, there is a unique witness square for $\A,\A'$, with Choi matrices of the form
\begin{equation*}
C_{ij} =
\begin{pmatrix}
|i\oplus 1\>\<i\oplus 1| & 0\\
0 & |\eta_{j\oplus 1}\>\<\eta_{j\oplus 1}|
\end{pmatrix}
.
\end{equation*}
\end{prop}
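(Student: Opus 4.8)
The plan is to read off both directions of the equivalence from the witness-square characterization \eqref{eq:witnessquare1}--\eqref{eq:witnessquare3} of maximal incompatibility. The necessity (``only if'') direction together with the asymmetric normal form is essentially already supplied by the preceding derivation, which I would invoke: the witness-square conditions force the input states $\rho=|x\>\<x|$, $\sigma=|y\>\<y|$ to be pure product states (via \eqref{eq:qubit-rhoKerM} and the impossibility of maximal Schmidt rank when $\dH=2$), pin the effects to the rank-one form $M=|\xi_0\>\<\xi_0|$, $N=|\eta_0\>\<\eta_0|$, fix the diagonal Choi blocks to $|\xi_{i\oplus1}\>\<\xi_{i\oplus1}|$ and $|\eta_{j\oplus1}\>\<\eta_{j\oplus1}|$, and force the off-diagonal blocks to vanish through positivity plus the averaging identity \eqref{eq:witnessquare1} and the linear independence of the operators $|\xi_i\>\<\eta_j|$. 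Since $\{|x\>,|y\>\}$ and $\{|\xi_0\>,|\xi_1\>\}$ both turn out orthonormal, I would finish this direction by conjugating the input space with the unitary sending $|x\>,|y\>\mapsto|0\>,|1\>$ and the output space with the unitary sending $|\xi_0\>,|\xi_1\>\mapsto|0\>,|1\>$; translating $F_{\rho,M}$ into its PPOVM operator $M\otimes|x\>\<x|^T$, these conjugations land the data in exactly the stated form, with $F_\A$ fully diagonalized and the basis $\{|\eta_0\>,|\eta_1\>\}$ left free (the output unitary having been spent on the $\xi$-basis).

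The genuinely remaining work is the converse (sufficiency) together with uniqueness, neither of which is contained in the necessity chain. For sufficiency I would simply check that the four matrices $C_{ij}$ form a witness square for the normal-form effects. Concretely: each $C_{ij}$ is a legitimate Choi matrix, being a positive block-diagonal operator whose two diagonal blocks are rank-one projections, so that $\Tr_1(C_{ij})=\I$; the averaging identity \eqref{eq:witnessquare1} holds because $\tfrac12(C_{00}+C_{11})$ and $\tfrac12(C_{10}+C_{01})$ both collapse to $\tfrac12\I_{\Ha\otimes\Ha}$; and evaluating the effects gives $\Tr(C_{ij}F_\A)=\delta_{i,1}$ and $\Tr(C_{ij}F_{\A'})=\delta_{j,1}$, which are precisely \eqref{eq:witnessquare2}--\eqref{eq:witnessquare3}. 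The existence of a witness square then yields maximal incompatibility by Prop.~\ref{prop:GPT-condMaxInc}.

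Uniqueness of the witness square follows from the same forcing used for necessity: the conditions \eqref{eq:witnessquare2}--\eqref{eq:witnessquare3} determine the diagonal blocks of every $C_{ij}$, while the $z_{ij}=0$ argument determines the off-diagonal blocks, so all four Choi matrices are fixed. I expect the only real difficulty to be bookkeeping rather than conceptual: one must keep the roles of the output and input tensor factors straight, handle the transpose in $M\otimes|x\>\<x|^T$ correctly under conjugation, and track the shifts $i\oplus1$, $j\oplus1$, so that the two unitaries produce precisely the asymmetric normal form (with $F_\A$ fixed but $\{|\eta_0\>,|\eta_1\>\}$ arbitrary) rather than a symmetric-looking but incorrect one.
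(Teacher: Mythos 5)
Your proposal is correct and takes essentially the same approach as the paper: the ``only if'' direction and the uniqueness of the witness square are obtained exactly as in the derivation preceding the proposition (pure product inputs via \eqref{eq:qubit-rhoKerM}, rank-one effects, forced diagonal Choi blocks, off-diagonal blocks killed by positivity together with the averaging identity \eqref{eq:witnessquare1}), finished by the two local unitaries. Your explicit verification of the ``if'' direction---that the stated $C_{ij}$ are valid Choi matrices forming a witness square for the normal-form PPOVMs, whence maximal incompatibility follows---is precisely the step the paper leaves implicit, and your computations there (both averages equal $\tfrac12\I$, $\Tr(C_{ij}F_\A)=\delta_{i,1}$, $\Tr(C_{ij}F_{\A'})=\delta_{j,1}$) are correct.
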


By the above proposition, essentially any maximally incompatible pair is characterized by the choice of an orthonormal basis $|\eta_0\>,|\eta_1\>$. The corresponding channel measurements then consist of inputting $|0\>$ into the channel and measuring the basis $\{|i\>\}$ on the output, or inputting $|1\>$ and measuring the basis $\{|\eta_i\>\}$.

Let us pick some choice of the bases $|\eta^A_0\>,|\eta^A_1\>$ and $|\eta^B_0\>,|\eta^B_1\>$ on Alice's and Bobs part, respectively, and let $\A,\A'$ and $\B,\B'$ denote the corresponding maximally incompatible pairs. Note that the set of all PR-channels that give an implementation of the PR-box with these measurements is a face of $\Ce^{ns}(\Ha\otimes\Ha)$. As we will see, all such faces consist entirely of entanglement-breaking channels.

Let $\Phi$ be a qubit PR-channel and let $C$ be its Choi matrix. Then $C\in B(\Ha_{\out}\otimes \Ha_{\inn})$, where both the input and the output spaces are composed of Alice's and Bob's part: $\Ha_{\inn}=\Ha_{A,in}\otimes \Ha_{B,in}$ and $\Ha_{\out}=\Ha_{A,out}\otimes \Ha_{B,out}$. We write $C$ as a block matrix \[ C=\sum_{\alpha\in \{0,1\}^2} C_{\alpha,\beta}\otimes |\alpha\>\<\beta|_{\inn}, \] where $C_{\alpha,\beta}\in B(\Ha_{\out})$. To describe the structure of $C$, we need to introduce the following notations. For $x,y,z\in \mathbb C$, we denote
\begin{align*}
B_{\diag}(z) &=
\begin{pmatrix}
1 & z \\ \bar{z} & 1
\end{pmatrix}
= I_2+ z|0\>\<1|+\bar z|1\>\<0| \\
B_{\off}(x,y) &=
\begin{pmatrix}
0 & x \\ y & 0
\end{pmatrix}
= x |0\>\<1|+y|1\>\<0|,
&&i\ne j.
\end{align*}
For $r\in \mathbb N$, we denote by $B_r$ a block matrix in $B(\mathbb C^2\otimes \mathbb C^r)$ of the form
\begin{align}
B_r = \dfrac{1}{2} \Bigg( \sum_{p=1}^r &B_{\diag}(z_p)\otimes |p\>\<p| \nonumber \\ + \sum_{q\ne p=1}^r &B_{\off}(x_{p,q},y_{p,q})\otimes |p\>\<q| \Bigg).\label{eq:qubit_B}
\end{align}

\begin{prop}
\label{prop:PRchan}
Let $\Phi$ be a qubit PR-channel. Then there are isometries $U_\alpha:\mathbb C^2\to \Ha_{\out}$, $\alpha\in \{0,1\}^2$ and a decomposition $\{0,1\}^2=\Delta_0\cup \Delta_1$, such that $U_{\alpha}=U_{\beta}=:V_{\out}$ for $\alpha,\beta\in \Delta_1$ and $C$ has the form
\begin{align}
C = &(V_{\out}\otimes V_{\inn})B_r(V^*_{\out}\otimes V^*_{\inn}) \nonumber \\ &+ \sum_{\alpha\in \Delta_0}U_{\alpha}B_{\diag}(z_\alpha) U_{\alpha}^* \otimes |\alpha\>\<\alpha|\label{eq:PRform},
\end{align}
here $|z_\alpha|\le 1$, $r=|\Delta_1|$ and $V_{\inn}:\mathbb C^r\to \Ha_{\inn}$ is an isometry such that $V_{\inn}|p\>=|\alpha_p\>$, $\alpha_p\in \Delta_1$, $p=1,\dots,r$.
\end{prop}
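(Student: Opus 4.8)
The plan is to work directly with the Choi matrix $C$ of the PR-channel, decomposed into input blocks. Using Prop.~\ref{prop:max_inc} I first normalize both maximally incompatible pairs to the standard form, so that on Alice's side the measurements amount to feeding $|0\>$ and measuring $\{|i\>\}$, or feeding $|1\>$ and measuring a basis $\{|\eta^A_i\>\}$, and analogously for Bob with a basis $\{|\eta^B_i\>\}$. I then write $C=\sum_{\alpha,\beta\in\{0,1\}^2}C_{\alpha,\beta}\otimes|\alpha\>\<\beta|_{\inn}$ with $C_{\alpha,\beta}=\Phi(|\alpha\>\<\beta|_{\inn})\in B(\Ha_{\out})$, and identify each input value $\alpha=(a_A,a_B)$ with a choice of measurement pair ($a_A=0$ gives $\A$, $a_A=1$ gives $\A'$, and similarly on Bob's side). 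The two facts I lean on throughout are complete positivity, $C\ge 0$, and the marginal (trace-preservation and non-signaling) identities $\Tr_{\out}(C_{\alpha,\beta})=\delta_{\alpha,\beta}$, $\Tr_{A,\out}(C)=\I_{A,\inn}\otimes D^B$ and $\Tr_{B,\out}(C)=D^A\otimes\I_{B,\inn}$ for channels $D^A,D^B$. Conceptually, Thm.~\ref{thm:finding-result} already tells us that $C=(\iota_\A\otimes\iota_\B)(\phi_S)+C^\perp$ with the perturbation $C^\perp$ confined to $\ker(\Pi_\A)\otimes\ker(\Pi_\B)$; this explains a priori why only a few coherence degrees of freedom can survive, and the block analysis below makes them explicit.

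The diagonal input blocks come first. For each $\alpha$ I expand the PR-box outcome probabilities for the corresponding measurement pair as diagonal matrix elements of $C_{\alpha,\alpha}$ in the relevant output product basis (computational or $\eta$, on each side). The forbidden outcome combinations have probability $0$, so the matching diagonal entries of the positive operator $C_{\alpha,\alpha}$ vanish, which forces the corresponding rows and columns to vanish as well; this confines the support of $C_{\alpha,\alpha}$ to the two-dimensional output subspace $\mathcal{R}_\alpha$ spanned by the two ``allowed'' product vectors. The allowed outcomes have probability $\tfrac12$, which pins the two surviving diagonal entries, and the only remaining freedom is an off-diagonal coherence of modulus $\le 1$. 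Thus $C_{\alpha,\alpha}=U_\alpha B_{\diag}(z_\alpha)U_\alpha^*$ (up to the normalization already present in $B_r$), where $U_\alpha:\mathbb C^2\to\Ha_{\out}$ is the isometry onto $\mathcal{R}_\alpha$ and $|z_\alpha|\le 1$.

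The off-diagonal input blocks are the heart of the argument. Positivity of the principal submatrix $\bigl(\begin{smallmatrix}C_{\alpha,\alpha}&C_{\alpha,\beta}\\ C_{\beta,\alpha}&C_{\beta,\beta}\end{smallmatrix}\bigr)$, via the standard factorization for positive $2\times2$ operator matrices, forces $\mathrm{ran}(C_{\alpha,\beta})\subseteq\mathcal{R}_\alpha$ and $\mathrm{ran}(C_{\alpha,\beta}^*)\subseteq\mathcal{R}_\beta$, so $C_{\alpha,\beta}$ is supported on $\mathcal{R}_\alpha\otimes\mathcal{R}_\beta$ and has only four free parameters. I then impose the marginal identities: if $\alpha,\beta$ differ in Alice's input then $\Tr_{A,\out}(C_{\alpha,\beta})=0$, and if they differ in Bob's input then $\Tr_{B,\out}(C_{\alpha,\beta})=0$. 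Expanding each such partial trace in the product bases spanning $\mathcal{R}_\alpha$ and $\mathcal{R}_\beta$, every surviving entry of $C_{\alpha,\beta}$ gets multiplied by an overlap of the form $\<\eta^A_i|j\>$ or $\<\eta^B_i|j\>$. Hence $C_{\alpha,\beta}=0$ unless these overlaps degenerate, i.e. unless the bases align so that $\mathcal{R}_\alpha=\mathcal{R}_\beta$; and when $C_{\alpha,\beta}\ne 0$ only the entries off-diagonal in the common basis survive, giving $C_{\alpha,\beta}=V_{\out}B_{\off}(x,y)V_{\out}^*$ up to normalization.

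It remains to assemble these blocks. I define $\Delta_1$ to be the set of inputs whose output subspace equals a fixed common subspace $V_{\out}(\mathbb C^2)$ and $\Delta_0$ the rest; the $\Delta_0$ blocks are input-diagonal by the previous paragraph, while the $\Delta_1$ blocks fit together, after conjugating out $V_{\out}$ and relabelling inputs by $V_{\inn}|p\>=|\alpha_p\>$, into a single matrix in $B(\mathbb C^2\otimes\mathbb C^r)$, which is exactly $B_r$ of \eqref{eq:qubit_B}; this yields \eqref{eq:PRform}. The main obstacle is precisely this assembly: I must show that all surviving coherences share one and the same output subspace, i.e. that there cannot be two disjoint coherent clusters built on different two-dimensional supports. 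This reduces to a finite check on the four candidate subspaces $\mathcal{R}_\alpha$ determined by $\eta^A,\eta^B$ (each is a span of two product vectors, and a product vector lies in another such span only in degenerate basis configurations). I expect the bookkeeping of the ``differ in both inputs'' pairs, where both partial-trace conditions act simultaneously, to be the most delicate step, and to be where one verifies that any coincidence of two supports propagates to a single common subspace.
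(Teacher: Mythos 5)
Your overall route is the same as the paper's own proof in Appendix \ref{sec:PRproof}: decompose the Choi matrix into input blocks $C_{\alpha,\beta}$, use the $0$/$\tfrac12$ PR-box probabilities together with positivity to confine each diagonal block to a two-dimensional support $\mathcal{R}_\alpha$, then use non-signaling to control the off-diagonal blocks; your diagonal-block analysis and the support confinement $\mathrm{ran}(C_{\alpha,\beta})\subseteq\mathcal{R}_\alpha$, $\mathrm{ran}(C_{\alpha,\beta}^*)\subseteq\mathcal{R}_\beta$ are correct. However, there is a genuine gap in the off-diagonal step: you impose only the \emph{vanishing} half of the non-signaling conditions ($\Tr_{A,\out}(C_{\alpha,\beta})=0$ when $\alpha,\beta$ differ in Alice's input, and the analogue for Bob), and your key inference --- that the overlaps $\<\eta^A_i|j\>$ can degenerate only when the bases align, so that $\mathcal{R}_\alpha=\mathcal{R}_\beta$ and only entries off-diagonal in the common basis survive --- is false. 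Take $|\eta^A_i\>=|i\oplus 1\>$ and $|\eta^B_i\>=|i\>$. For $\alpha=00$, $\beta=10$ (differing only in Alice's input) one has $\mathcal{R}_{00}=\lin\{|00\>,|11\>\}$ and $\mathcal{R}_{10}=\lin\{|10\>,|01\>\}$, which are \emph{orthogonal}, yet the block $C_{00,10}=\tfrac12(|00\>\<10|+|11\>\<01|)$ passes every condition you use: each term has vanishing partial trace over Alice's output (since $\Tr|0\>\<1|=0$), it has zero total trace, and adding it and its adjoint to the standard diagonal PR-channel for these bases keeps the whole matrix positive, because $C_{00,10}$ is a partial isometry with $C_{00,10}C_{00,10}^*=P_{\mathcal{R}_{00}}$ and $C_{00,10}^*C_{00,10}=P_{\mathcal{R}_{10}}$. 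Such a $C$ is not of the form \eqref{eq:PRform}, since a nonzero block between $\alpha$ and $\beta$ would require $U_\alpha=U_\beta$, whereas here the two supports are orthogonal. So the constraints you actually invoke do not suffice to prove the proposition.

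What rules out this configuration is the \emph{equality} half of the non-signaling conditions, which couples distinct off-diagonal blocks: $\Tr_{A,\out}(C_{0m,0n})=\Tr_{A,\out}(C_{1m,1n})$ and $\Tr_{B,\out}(C_{k0,l0})=\Tr_{B,\out}(C_{k1,l1})$, Eqs. \eqref{eq:partialA_eq} and \eqref{eq:partialB_eq} of the paper. In the example above, $\Tr_{B,\out}(C_{00,10})=\tfrac12(|0\>\<1|+|1\>\<0|)$ must equal $\Tr_{B,\out}(C_{01,11})$, which is of the complementary (here zero) type, forcing both to vanish; this is precisely how the paper kills the coefficients $c^{k0,l0}_{ii}$ and $c^{k1,l1}_{ij}$, $i\ne j$. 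You do state the full marginal identities $\Tr_{A,\out}(C)=\I_{A,\inn}\otimes D^B$ and $\Tr_{B,\out}(C)=D^A\otimes\I_{B,\inn}$ in your setup, but their block-coupling content is never used in your analysis, and your proposed fix in the final paragraph (a finite check of coincidences among the four subspaces $\mathcal{R}_\alpha$) cannot repair this, because the problematic coherences connect genuinely different supports and are invisible to any support-counting argument. Once these equality constraints are brought into the block analysis, the case bookkeeping closes exactly as in the paper's Table \ref{tbl:PRproof-C}.
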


The proof of this proposition is given in Appendix \ref{sec:PRproof}.

\begin{lemma}
\label{lemma:qubit_lemma3}
Let $r \leq 3$ and let $B_r$ be a matrix of the form \eqref{eq:qubit_B}. If $B_r$ is positive, then it is separable.
\end{lemma}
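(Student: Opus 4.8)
The plan is to reduce the $2r\times 2r$ matrix $B_r$ to a two-by-two block form on the qubit factor, to characterize its positivity by a single contraction condition, and then to observe that this condition is invariant under partial transposition; separability for $r\le 3$ will then follow from the Peres--Horodecki criterion.

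First I would collect the $\mathbb C^2$-indices as the outer block structure. Writing $Z\in B(\mathbb C^r)$ for the matrix with diagonal entries $z_p$ and off-diagonal entries $x_{p,q}$, one checks directly from \eqref{eq:qubit_B} that Hermiticity of $B_r$ (which positivity entails) forces $y_{p,q}=\bar x_{q,p}$, so that
\begin{equation*}
2B_r = I_2\otimes I_r + |0\>\<1|\otimes Z + |1\>\<0|\otimes Z^*
= \begin{pmatrix} I_r & Z \\ Z^* & I_r \end{pmatrix}.
\end{equation*}
Since the diagonal blocks are the identity, the Schur complement criterion gives $B_r\ge 0$ if and only if $I_r-Z^*Z\ge 0$, that is, if and only if $\|Z\|\le 1$ in operator norm.

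Next I would take the partial transpose $T_1$ on the qubit factor. Because $|0\>\<1|$ and $|1\>\<0|$ merely swap under transposition while the identity diagonal blocks are fixed, one gets
\begin{equation*}
2B_r^{T_1} = \begin{pmatrix} I_r & Z^* \\ Z & I_r \end{pmatrix},
\end{equation*}
whose positivity is governed by the same condition $\|Z^*\|=\|Z\|\le 1$. Hence, for every $r$, positivity of $B_r$ already implies that $B_r$ is PPT. This is the crux of the argument, and I expect it to be the main point to get right: it is the special structure of $B_r$ (identity diagonal blocks, off-diagonal blocks interchanging under transpose) that makes the PPT property automatic, rather than something to be verified case by case.

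Finally, for $r\le 3$ the matrix $B_r$ is a positive multiple of a state on $\mathbb C^2\otimes\mathbb C^r$ with $\mathbb C^r$ of dimension at most $3$, so the Peres--Horodecki criterion --- PPT is equivalent to separability for $2\times 2$ and $2\times 3$ systems --- applies and yields separability of $B_r$. The bound $r\le 3$ is precisely the threshold below which PPT guarantees separability; for $r\ge 4$ the PPT property established above still holds but no longer forces separability, which is why the hypothesis $r\le 3$ cannot be dropped by this route.
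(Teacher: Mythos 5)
Your proof is correct and follows essentially the same route as the paper: both arguments show that positivity of $B_r$ automatically implies positivity of its partial transpose and then invoke the Peres--Horodecki criterion for $2\times 2$ and $2\times 3$ systems. Your Schur-complement step is a repackaging of the paper's key identity $B_r^\Gamma=(V\otimes \I_r)B_r(V\otimes \I_r)$ with $V$ the qubit flip unitary --- swapping the off-diagonal blocks $Z$ and $Z^*$ in your $2\times 2$ block picture is exactly this conjugation --- so the two verifications of the PPT property amount to the same observation.
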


\begin{proof}
Let $B_r$ be positive. Since $B_r\in B(\mathbb C^2\otimes \mathbb C^r)$ and $r\le 3$, we may apply the PPT criterion \cite{Woronowicz-PPT, HorodeckiHorodeckiHorodecki-PPT}, that is, $B_r$ is separable if and only if it remains positive under partial transpose. We will apply the transpose to the first part, so we will show that the matrix
\begin{align*}
B^\Gamma_r = \dfrac{1}{2} \Bigg( \sum_{p=1}^r &B_{\diag}(z_p)^T\otimes |p\>\<p|\\ + \sum_{q\ne p=1}^r &B_{\off}(x_{p,q},y_{p,q})^T\otimes |p\>\<q| \Bigg)
\end{align*}
is positive. Let $V \in B(\mathbb C^2)$ be given as
\begin{equation*}
V =
\begin{pmatrix}
0 & 1 \\ 1 & 0
\end{pmatrix}
\end{equation*}
then $V$ is unitary, $V = V^*$ and for any $t \in \mathbb{R}$ and $z_1, z_2 \in \mathbb{C}$ we have
\begin{equation*}
\begin{pmatrix}
0 & 1 \\ 1 & 0
\end{pmatrix}
\begin{pmatrix}
t & z_1 \\ z_2 & t
\end{pmatrix}
\begin{pmatrix}
0 & 1 \\ 1 & 0
\end{pmatrix}
=
\begin{pmatrix}
t & z_2\\ z_1 & t
\end{pmatrix}
=
\begin{pmatrix}
t & z_1 \\ z_2 & t
\end{pmatrix}
^T
.
\end{equation*}
It follows that
\begin{equation*}
B_r^\Gamma = (V \otimes \I_r) B_r (V \otimes \I_r)
\end{equation*}
so we have $B_r^\Gamma \geq 0$ and $B_r$ is separable.
\end{proof}

We now prove the main result of this section.
\begin{theorem}
\label{thm:qubit-ETBchannel}
Let $\Phi$ be a qubit PR-channel. Then $\Phi$ is an entanglement-breaking channel.
\end{theorem}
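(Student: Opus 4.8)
The goal is to show that every qubit PR-channel is entanglement-breaking. The key structural input is Proposition~\ref{prop:PRchan}, which says that the Choi matrix $C$ of any qubit PR-channel decomposes as in \eqref{eq:PRform}: a central block $(V_{\out}\otimes V_{\inn})B_r(V_{\out}^*\otimes V_{\inn}^*)$ supported on the indices in $\Delta_1$, plus a diagonal sum $\sum_{\alpha\in\Delta_0}U_\alpha B_{\diag}(z_\alpha)U_\alpha^*\otimes|\alpha\>\<\alpha|$. The plan is to show that each of these summands is a separable operator, and then to invoke the standard fact that a channel whose Choi matrix is separable (across the output/input cut) is entanglement-breaking.

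First I would treat the diagonal terms in $\Delta_0$. Each summand $U_\alpha B_{\diag}(z_\alpha)U_\alpha^*\otimes|\alpha\>\<\alpha|$ already carries a product $|\alpha\>\<\alpha|$ on the input factor, so separability reduces to separability of $U_\alpha B_{\diag}(z_\alpha)U_\alpha^*$ as an operator on the output space $\Ha_{\out}=\Ha_{A,out}\otimes\Ha_{B,out}$. Since $B_{\diag}(z_\alpha)$ is a $2\times 2$ positive matrix (with $|z_\alpha|\le 1$), and $U_\alpha$ is an isometry $\mathbb C^2\to\Ha_{\out}$, I would argue that its image is supported on at most a two-dimensional subspace on at least one of the tensor factors, so the PPT criterion applies and gives separability exactly as in the proof of Lemma~\ref{lemma:qubit_lemma3}. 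The central block is the essential case: here I would apply Lemma~\ref{lemma:qubit_lemma3} directly. Since $r=|\Delta_1|\le 4$, one must check that the hypothesis $r\le 3$ of the lemma is actually met, or extend the argument; I expect that the PR-channel constraints force $r\le 3$ (the case $r=4$ would correspond to a fully invertible local structure, which is obstructed by the non-signaling and maximal-violation conditions). Once $B_r$ is separable, conjugation by the isometries $V_{\out}\otimes V_{\inn}$ preserves separability.

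Having shown both pieces of \eqref{eq:PRform} are separable, their sum $C$ is separable, since separable operators form a convex cone. Finally I would translate separability of the Choi matrix into the entanglement-breaking property: a channel $\Phi$ is entanglement-breaking if and only if its Choi matrix is separable with respect to the output--input bipartition, which guarantees that $(\Phi\otimes\id)(\omega)$ is separable for every input state $\omega$. This is the standard characterization, so this last step is routine.

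\textbf{Main obstacle.} The hard part will be the central block: verifying that $r\le 3$ so that Lemma~\ref{lemma:qubit_lemma3} genuinely applies, and confirming that no contribution with $r=4$ can arise for a legitimate PR-channel. This requires using the explicit form of the maximally incompatible measurements from Proposition~\ref{prop:max_inc} together with the non-signaling and trace-preservation constraints on $C$, to rule out the full-rank case on the input side. If instead $r=4$ can occur, I would need a separate PPT/separability argument for $B_4$ adapted to the specific coefficient structure forced by \eqref{eq:PRform}, rather than the clean $r\le 3$ appeal.
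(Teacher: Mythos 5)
Your plan follows the same route as the paper: decompose $C(\Phi)$ via Prop.~\ref{prop:PRchan}, prove each summand separable, use that separable operators form a convex cone, and finish with the standard characterization that a channel is entanglement-breaking iff its Choi matrix is separable across the output/input cut. However, your handling of the $\Delta_0$ terms contains a genuine error, coming from a confusion between two different bipartitions. The separability relevant for the entanglement-breaking property is across $\Ha_{\out}$ versus $\Ha_{\inn}$; it is \emph{not} separability across the Alice/Bob cut $\Ha_{A,out}\otimes\Ha_{B,out}$ inside the output. Across the correct cut, each summand $U_\alpha B_{\diag}(z_\alpha)U_\alpha^*\otimes|\alpha\>\<\alpha|$ is literally a product of two positive operators, so there is nothing to prove. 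The claim you substitute for it --- that $U_\alpha B_{\diag}(z_\alpha)U_\alpha^*$ is separable as an operator on $\Ha_{A,out}\otimes\Ha_{B,out}$ --- is false in general: since $U_\alpha$ maps $|0\>,|1\>$ to orthonormal \emph{product} vectors, for $|z_\alpha|=1$ one has $U_\alpha B_{\diag}(z_\alpha)U_\alpha^*=|w\>\<w|$ with $|w\>=U_\alpha(|0\>+\bar z_\alpha |1\>)$ maximally entangled across the $A/B$ cut. This is exactly what happens in the paper's example $\Phi_\pm$, whose target states are the Bell states $|\phi^\pm\>$, $|\psi^\pm\>$; in particular the $\Delta_0$ block there produces $|\psi^\pm\>\<\psi^\pm|$. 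So the PPT argument you sketch for these blocks cannot succeed --- fortunately it is also not needed. The same bipartition remark is what makes the central block work: Lemma~\ref{lemma:qubit_lemma3} gives separability of $B_r$ across the $\mathbb C^2$/$\mathbb C^r$ cut, and conjugation by the isometry $V_{\out}\otimes V_{\inn}$ carries precisely this cut onto the $\Ha_{\out}$/$\Ha_{\inn}$ cut.

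Your ``main obstacle'' is correctly identified, but it is not an open issue requiring new work: the statement of Prop.~\ref{prop:PRchan} alone would indeed allow $r=|\Delta_1|=4$ (where a bare PPT appeal would be inconclusive, since $2\times 4$ systems admit PPT entangled states), but the proof of that proposition in Appendix~\ref{sec:PRproof}, summarized in Table~\ref{tbl:PRproof-C}, shows that $r\in\{1,2,3\}$ in every case: whenever $U_A,U_B\in\{\I,V\}$, the non-signaling conditions force one index $\alpha$ to decouple completely from the others ($C_{\beta,\alpha}=0$ for all $\beta\ne\alpha$), so that $\Delta_0$ is a singleton and $r=3$; in all remaining cases $r\le 2$. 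Citing that case analysis, together with the corrected treatment of the $\Delta_0$ terms above, turns your outline into the paper's proof of Thm.~\ref{thm:qubit-ETBchannel}.
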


\begin{proof}
The channel $\Phi$ is entanglement-breaking if and only if its Choi matrix $C=C(\Phi)$ is separable. The assertion now follows by Prop. \ref{prop:PRchan} and Lemma \ref{lemma:qubit_lemma3}.
\end{proof}

We will proceed by presenting a few examples of PR-channels. We concentrate on the choice $|\eta^A_i\>=|\eta^B_i\>=|i\>$. According to Appendix \ref{sec:PRproof}, in this case we have \[ U_\alpha|i\>=
\begin{dcases}
|ii\> & \text{if } \alpha\ne 11\\
|i(i\oplus 1)\> & \text{otherwise}
\end{dcases}
\] $r=3$ and $\Delta_0=\{11\}$. So any such channel is specified by the choice of the parameters $z_p, x_{p,q}, y_{p,q}$, $p,q=1,\dots,3$ such that the matrix $B_3$ is positive, and any choice of $z_4:=z_{11}$ with $|z_{4}|\le 1$.

An obvious choice is setting all these parameters to 0, in which case we obtain the channel of Example \ref{exm:PhiS}.

\begin{exm}
Another possible choice of parameters is
\begin{equation*}
z_p=\pm 1,\qquad x_{p,q}=y_{p,q}=0,\ \forall p,q.
\end{equation*}
The resulting channels $\Phi_\pm$ are similar to $\Phi_{S}$. We have
\begin{equation*}
\Phi_\pm (\rho) = (1 - \<11|\rho|11\> ) |\phi^\pm\>\<\phi^\pm| + \<11|\rho|11\> |\psi^\pm\>\<\psi^\pm|
\end{equation*}
where
\begin{align*}
|\phi^\pm\> &= \dfrac{1}{\sqrt{2}} ( |00\> \pm |11\> ), \\
|\psi^\pm\> &= \dfrac{1}{\sqrt{2}} ( |01\> \pm |01\> ).
\end{align*}
These channels are again classical-to-classical, but here the target states are pure and maximally entangled. A similar channel was also constructed by \cite{Crepeau-CHSH}.
\end{exm}

\begin{exm}
Let $W:\mathbb C^2\to V_{\out}$ be the isometry given by $W|i\>\mapsto |ii\>$ and let $\tilde W:=(V\otimes \I)W$. Let $M_\ell\in B(\Ha\otimes\Ha)$, $\ell=1,\dots,k$ be effects such that $\sum_\ell M_\ell=I-|11\>\<11|$ and let $|w_\ell|\le 1$, $\ell=0,1,\dots,k$. Then the entanglement-breaking channel
\begin{align*}
\Phi(\rho)= \frac12\biggl(\<11|\rho|11\>& \tilde W B_{\diag}(w_0)\tilde W^*\\
&+\sum_{\ell=1}^k \Tr (M_\ell\rho) WB_{\diag}(w_\ell)W^*\biggr)
\end{align*}
is a PR-channel of the required form, with values of the parameters $z_4=\omega_0$ and for $p,q\le 3$
\begin{align*}
z_p &= \sum_{\ell=1}^k w_\ell \<\alpha_p|M_\ell|\alpha_p\> \\ x_{p,q} = y_{p,q} &= \sum_{\ell=1}^k w_\ell \<\alpha_p|M_\ell|\alpha_q\>.
\end{align*}
This example contains the above examples. Note that not all the PR-channels can be written in this form, since here $x_{p,q}=y_{p,q}$. Note also that we may choose $M_\ell$ and $w_\ell$ is such a way that the channel is not classical-to-classical and neither the measurement nor the target states are separable.

\end{exm}

\begin{exm}
We next look at an example where all the parameters have the same nonzero value, namely
\begin{align*}
\dfrac{1}{3} &= z_{p} = x_{p,q} = y_{p,q},\quad \forall p,q.
\end{align*}
One can use numerical calculations to check that the corresponding matrix $B_3$ is positive. This example shows that we can have all of the parameters non-zero at the same time.
\end{exm}

\section{Conclusions} \label{sec:con}

We have show that maximal violation of the CHSH inequality requires existence of maximally incompatible two-outcome measurements and described states in $K\tmax K$ that lead to this violation. It follows that a GPT permits implementations of the PR-box if it contains a system with maximally incompatible measurements and such that the joint state space is large enough. We have applied the results to derive the implementations of PR-boxes by classical and quantum non-signaling channels. The derivation was carried out in the framework of GPTs, which opens the door for generalizations of our calculation. For the qubit case, we gave a full description of the PR-channels and proved that all such channels are necessarily entanglement-breaking.

The question of possibility of instantaneous implementation of these channels is out of the scope of this work. Obtaining some no-go theorems that would forbid such a possibility would provide further insight into Bell non-locality and our complete characterization of the qubit case might be useful for proving such results.

There is a plethora of further open questions and directions of research: one may ask about the structure of all implementations of PR-boxes for channels in higher dimensions and also for more general state spaces, one may also ask which states (and which measurements) violate the CHSH inequality more than a given number. One may also consider a resource theory of CHSH inequality violations.

Our results also raise the question of general applicability of the CHSH inequality as a test of quantumness of a system, if having too big CHSH violation constrains us to entanglement-breaking channels, which can be seen as classical channels in a sense. This also suggests the existence of some kind of trade-off between CHSH violation and some notion of quantumness of non-signaling channels.

\begin{acknowledgments}{We would like to thank the anonymous referee for giving us valuable hints and insights to improve the readability of the manuscript.} This research was supported by grant VEGA 2/0142/20 and by the grant of the Slovak Research and Development Agency under contract APVV-16-0073. MP acknowledges support from the DFG and the ERC (Consolidator Grant 683107/TempoQ).
\end{acknowledgments}

\bibliography{citations}

\appendix
\section{The Popescu-Rohrlich box}\label{sec:PRbox}
Let $x$ be a non-signaling box such that
\begin{equation*}
X_\chsh=E(\A, \B) + E(\A, \B') + E(\A', \B) - E(\A', \B') = 4.
\end{equation*}
Since all the correlations are in $[-1,1]$, we must have \[ E(\A, \B) =E(\A, \B') = E(\A', \B)= - E(\A', \B')=1. \] It is easily seen that this happens if and only if we have
\begin{align*}
P_{x}(1,1 | \A, \B) + P_{x}(-1, -1| \A, \B) &= 1, \\ P_{x}(1,1 | \A, \B') + P_{x}(-1, -1 | \A, \B') &= 1, \\ P_{x}(1,1 | \A', \B) + P_{x}(-1, -1 | \A', \B) &= 1, \\ P_{x}(1, -1 | \A', \B') + P_{x}(-1, 1 | \A', \B') &= 1,
\end{align*}
with all other probabilities equal to $0$. From this and the non-signaling conditions, we obtain
\begin{align*}
P_x (1, 1 | \A, \B) &= P_x(1,1 | \A, \B) + P_{x}(1, -1 | \A, \B) \\ &= P_{x}(1,1 | \A, \B') + P_{x}(1, -1 | \A, \B') \\ &= P_{x}(1, 1 | \A, \B').
\end{align*}
In a similar fashion one may show that all the nonzero probabilities must be equal, which implies the equality \eqref{eq:square_prob}. The case $X_\chsh=-4$ is treated similarly.

\section{Implementation of the PR-box on the square state space} \label{sec:square}
Let $S$ be the square state space and let $\A_0$, $\A_{1}$ be the maximally incompatible two-outcome measurements corresponding to the effects $\pi_0, \pi_{1} \in E(S)$ respectively. Since $\{1,\pi_0,\pi_{1}\}$ form a basis of $A(S)$, it is clear that any element $\phi\in S\tmax S\subset A(S)^*\otimes A(S)^*$ is uniquely determined by the values $(f\otimes g)(\phi)$, where $f,g\in \{\pi_0,1-\pi_0,\pi_{1},1-\pi_{1}\}$, which are exactly the outcome probabilities $P_\phi(\epsilon, \eta | \C, \D)$, $\C, \D \in \{\A_0,\A_{1}\}$, $\epsilon,\eta\in \{-1,1\}$. So if there is an implementation of the PR-box with the measurements $\{\A_0,\A_{1}\}$ on both sides, it must be unique. Moreover, it follows by Corollary \ref{coro:maxinc} that any other implementation, with other maximally incompatible pairs of measurements, is obtained by applying a local isomorphism on each copy of $S$.

Let us now find a state $\phi_S \in S \tmax S$, satisfying the equalities \eqref{eq:square_prob}. Every $\phi \in A(S)^* \otimes A(S)^*$ can be written as
\begin{equation*}
\phi = \psi_{00} \otimes s_{00} + \psi_{10} \otimes s_{10} + \psi_{01} \otimes s_{01}
\end{equation*}
for some $\psi_{00}, \psi_{10}, \psi_{01} \in A(S)^*$. By the characterization of $E(S)$ in Example \ref{exm:square}, we see by applying the maps $1 \otimes 1$, $\id \otimes \pi_0$, $\id \otimes (1-\pi_0)$, $\id \otimes \pi_{1}$, $id \otimes (1-\pi_{1})$ that $\phi \in S\tmax S$ if and only if
\begin{equation*}
1(\psi_{00} + \psi_{10} + \psi_{01}) = 1
\end{equation*}
and
\begin{align*}
&\psi_{10} \geq 0,
&&\psi_{00} + \psi_{10} \geq 0, \\
&\psi_{01} \geq 0,
&&\psi_{00} + \psi_{01} \geq 0,
\end{align*}
Writing $\psi_{00},\psi_{01},\psi_{10}$ in the basis $\{s_{00},s_{10}-s_{00},s_{01}-s_{00}\}$ and using the fact that $\{1,\pi_0,\pi_{1}\}$ is the dual basis (see Example \ref{exm:square}), we see that the required equalities hold if we put
\begin{align*}
&\psi_{00} = \dfrac{1}{2} ( s_{00} - s_{10} ),
&&\psi_{10} = \dfrac{1}{2} s_{11},
&&\psi_{01} = \dfrac{1}{2} s_{10},
\end{align*}
and it is easily checked that the conditions required for $\phi\in S\tmax S$ are satisfied (the last inequality follows from $s_{00}+s_{11}=s_{10}+s_{01}$). This gives \eqref{eq:square_PR}.

\section{Proof of Thm. \ref{thm:finding-result}} \label{sec:proof}

Since $X_{\chsh}=4$, we see that the outcome probabilities must satisfy \eqref{eq:square_prob}. Let $f_\A, f_{\A'}, f_\B,f_{\B'}\in E(K)$ be the effects corresponding to the four measurements and put
\begin{align*}
&x_{00} = 2 ((1-f_\A) \otimes \id)(\phi),
&&x_{11} = 2 (f_\A \otimes \id)(\phi) \\
&x_{01} = 2 ((1-f_{\A'}) \otimes \id)(\phi),
&&x_{10} = 2 (f_{\A'}\otimes \id)(\phi).
\end{align*}
From the fixed values of the outcome probabilities, it can be checked that $ x_{i,j}\in K$ and that for all $i, j \in \{0, 1 \}$ we have
\begin{align*}
&f_\B (x_{ij}) = i, &&f_{\B'} (x_{ij}) = j
\end{align*}
and
\begin{equation*}
x_{00} + x_{11} = x_{10} + x_{01},
\end{equation*}
i.e. the points $x_{ij}$ form a witness square for $\B, \B'$. It follows that $\B$ and $\B'$ are maximally incompatible and we can prove in the same way that $\A$ and $\A'$ are maximally incompatible as well. Let $\iota_\A, \Pi_\A$ and $\iota_\B$, $\Pi_\B$ be the affine maps from Prop. \ref{prop:GPT-condMaxInc}, resp. their positive linear extensions. Then it is easily seen from \eqref{eq:ext_pi} that we have
\begin{align*}
&\pi_0\circ \Pi_\A =f_\A,
&&\pi_{1}\circ\Pi_{\A}=f_{\A'},\\
&\pi_0\circ \Pi_\B=f_\B,
&&\pi_{1}\circ\Pi_{\B}=f_{\B'}.
\end{align*}
It follows that $(\Pi_\A\otimes \Pi_B)(\phi)\in S\tmax S$ with the pair of two-outcome measurements $\A_0$, $\A_{1}$ applied on both sides is an implementation of the PR-box, hence we must have $(\Pi_\A\otimes \Pi_B)(\phi)=\phi_S$.

Next, note that $P_\A:=\iota_\A\circ \Pi_\A$ ($P_\B:=\iota_\B\circ \Pi_\B$) is a positive projection on $A(K)^*$ onto the range of the map $\iota_\A$ ($\iota_\B$). We then have
\begin{align*}
\phi&=(P_\A\otimes P_\B)(\phi)+ (id-P_\A\otimes P_\B)(\phi)\\
&=(\iota_\A\otimes \iota_\B)(\phi_S)+\phi^\perp,
\end{align*}
where $(P_\A\otimes P_\B)(\phi^\perp)=0$, which implies that \[ (\Pi_\A\otimes\Pi_\B)(\phi^\perp)=(\Pi_\A\otimes\Pi_\B)\circ (P_\A\otimes P_\B)(\phi^\perp)=0. \] Now note that using \eqref{eq:ext_pi} and the definition of $x_{ij}$, we get
\begin{align*}
(\Pi_\A\otimes id)(\phi)&=\frac12((s_{00}-s_{10})\otimes x_{00} \\
&+ s_{11}\otimes x_{10}+ s_{10}\otimes x_{01})\\
&= (id\otimes \iota_\B)(\phi_S)=(\Pi_\A\otimes id)((\iota_\A\otimes\iota_\B)(\phi_S))
\end{align*}
and hence $(\Pi_\A\otimes id)(\phi^\perp)=0$. In a similar manner, we obtain that $(id\otimes \Pi_\B)(\phi^\perp)=0$, which implies that $\phi^\perp\in \ker(\Pi_\A)\otimes\ker(\Pi_\B)$.

\begin{table*}
\caption{All possible forms of $C$. \label{tbl:PRproof-C}}
\begin{ruledtabular}
\begin{tabular}{cccc}
$U_A/U_B$ & $\neq$ & $\I$ & $V$ \\
\hline
\multirow{3}{*}{$\neq$} & $r=1$ & $r=2$ & $r=2$ \\
& & $\Delta_1=\{00,01\}$ & $\Delta_1=\{10,11\}$ \\
& & $V_{\out}=W$ & $V_{\out}=(U_A\otimes \I) W$ \\
\hline
\multirow{3}{*}{$\I$} & $r=2$ & $r=3$ & $r=3$ \\
& $\Delta_1=\{01,11\}$ & $\Delta_0=\{11\}$ & $\Delta_0=\{01\}$ \\
& $V_{\out}=W$ & $V_{\out}=W$ & $V_{\out}=W$ \\
\hline
\multirow{3}{*}{$V$} & $r=2$ & $r=3$ & $r=3$ \\
& $\Delta_1= \{00,10\}$ & $\Delta_0=\{10\}$ & $\Delta_0=\{00\}$ \\
& $V_{\out}=(\I\otimes U_B)W$ & $V_{\out}=W$ & $V_{\out}=(\I\otimes V)W$
\end{tabular}
\end{ruledtabular}
\end{table*}

\section{The proof of Prop. \ref{prop:PRchan}}\label{sec:PRproof}
Let $\A$, $\A'$ be the pair of maximally incompatible two-outcome measurements corresponding to the choice of the ONB $|\eta^A_0\>$, $|\eta^A_1\>$ (Prop. \ref{prop:max_inc}) and similarly let $\B$, $\B'$ be the measurements for $|\eta^B_0\>$, $|\eta_1^B\>$. Let $U_A,U_B:\Ha\to \Ha$ denote the unitaries given as \[ U_A|i\>= |\eta^A_i\>,\ U_B|i\>= |\eta^B_i\>. \]

Let $\Phi$ be a PR-channel that maximally violates the CHSH inequality with this choice of measurements and let $C=C(\Phi)$ have the block-diagonal form \[ C=\sum_{\alpha,\beta\in \{0,1\}^2} C_{\alpha,\beta}\otimes |\alpha\>\<\beta|_{\inn}. \] Then $\Phi$ must satisfy \eqref{eq:square_prob}, where for all $\C\in \{\A,\A'\}$ and $\D\in \{\B,\B'\}$, we have
\begin{align*}
P_\Phi(1,1 | \C,\D)&= \Tr( \swap_{23}(F_\C\otimes F_\D)C)\\ P_\Phi(-1, 1 | \C,\D)&= \Tr( \swap_{23}(F^\perp_\C\otimes F_\D)C)\\ P_\Phi(1, -1 | \C,\D)&= \Tr( \swap_{23}(F_\C\otimes F^\perp_\D)C)\\ P_\Phi(-1, -1 | \C,\D)&= \Tr( \swap_{23}(F^\perp_\C\otimes F^\perp_\D)C),
\end{align*}
here $F_\C, F_C^\perp$ are the PPOVM operators corresponding to $\C\in \{\A,\A'\}$ and $F_\D,F_\D^\perp$ correspond to $\D\in \{\B,\B'\}$. Let $Q_h$, $h=1,\dots,16$ be operators obtained as \[ Q_h=\swap_{23}(H_A\otimes H_B), \] where $H_\A\in \{F_\A,F_\A^\perp,F_{\A'},F_{\A'}^\perp\}$ and similarly for $H_B$. Then $Q_h$ are mutually orthogonal rank 1 product projections and all the values of $\Tr(Q_hC)$ are either 0 or $1/2$. Let \[ P=\sum \{Q_h,\ \Tr(Q_hC)=1/2\}. \] Since $C\ge 0$ and $\Tr((\I-P)C)=0$, we must have $C=PCP$. One can check by \eqref{eq:square_prob} that \[ P=\sum_{\alpha\in \{0,1\}^2} U_\alpha U^*_\alpha\otimes |\alpha\>\<\alpha|_{\inn} \] for the isometries $U_\alpha: \mathbb C^2\to \Ha_{\out}$, given as \[ U_{km}=(U_A^k\otimes U_B^mV^{m.k})W,\quad k,m=0,1, \] where $V=
\begin{pmatrix}
0 & 1 \\ 1 & 0
\end{pmatrix}
$ and $W: \mathbb C^2\to \Ha_{\out}$ is the isometry given as $W|i\>=|ii\>$. It follows that \[ C=PCP=\sum_{\alpha,\beta} U_\alpha U_\alpha^*C_{\alpha,\beta}U_\beta U_\beta^*\otimes |\alpha\>\<\beta|_{\inn}, \] so that for $\alpha=km$, $\beta=ln$ we have \[ C_{\alpha,\beta}=\sum_{i,j} c^{\alpha,\beta}_{ij}U_A^k|i\>\<j|U_A^{-l}\otimes U_B^mV^{k.m}|i\>\<j|V^{-l.n}U_B^{-n} \] for some coefficients $c^{\alpha,\beta}_{ij}$.

Since $\Phi$ is non-signaling, the Choi matrix must satisfy the conditions
\begin{align*}
&\Tr_{A,out} (C) = \I_{A,in} \otimes C_B,
&&\Tr_{B,out} (C) = \I_{B,in} \otimes C_A,
\end{align*}
where $C_B\in B^+(\Ha_{B,out}\otimes \Ha_{B,in})$ and $C_A\in B^+(\Ha_{A,out}\otimes \Ha_{A,in})$. This amounts to
\begin{align}
&\Tr_A (C_{km,ln}) = 0,
&&k \neq l,
&&\forall m,n
\label{eq:partialA_neq} \\
&\Tr_A (C_{0m,0n}) = \Tr_A C_{1m,1n},
&&
&&\forall m,n
\label{eq:partialA_eq} \\
&\Tr_B(C_{km,ln}) = 0,
&&m \neq n,
&&\forall k,l
\label{eq:partialB_neq} \\
&\Tr_B(C_{k0,l0}) = \Tr_B(C_{k1,l1}),
&&
&&\forall k,l.\label{eq:partialB_eq}
\end{align}

If $k=l$, we have
\begin{equation*}
\Tr_A (C_{km,kn}) = \sum_i c^{km,kn}_{ii} U_B^mV^{k.m}|i\>\<i|V^{-k.n}U_B^{-n}.
\end{equation*}
By \eqref{eq:partialA_eq}, we see that we must have $c_{ii}^{km,kn}=0$, $\forall i$ whenever $m\ne n$. Similarly, by \eqref{eq:partialB_eq} we obtain for $k\ne l$ that $c_{ii}^{k0,l0}=0$, $\forall i$ and $c_{ij}^{k1,l1}=0$, $\forall i\ne j$. Note also that $c_{ii}^{km,km}=1/2$ by \eqref{eq:square_prob}.

Looking at the conditions \eqref{eq:partialA_neq} and \eqref{eq:partialB_neq}, we see that there are two different possibilities on each side: either $\<i|U_{A}|j\>=0$ for some $i,j$ or $\<i|U_{A}|j\>\ne 0$ for all $i,j$, the same for $U_B$. The first condition means that $U_A$ is diagonal or off-diagonal (i.e. $\<i| U_A|i\>=0$). Since the diagonal (off-diagonal) elements in these cases only correspond to scalar factors of the basis vectors, they may safely be put to 1, so that $U_A=\I$ ($U_A=V$).

Assume that $U_A\ne \I$, $U_A\ne V$, then we obtain from \eqref{eq:partialA_neq} that $C_{km,ln}=0$ whenever $k\ne l$. There are essentially two off-diagonal blocks left:
\begin{equation*}
C_{00,01}=C_{01,00}^*=\sum_{i\ne j} c^{00,01}_{ij} |i\>\<j|\otimes |i\>\<j|U_B^*
\end{equation*}
and
\begin{equation*}
C_{10,11}(=C_{11,10}^*)=\sum_{i\ne j} c^{11,01}_{ij} U_A|i\>\<j|U^*_A\otimes |i\>\<j|V^*U^*_B.
\end{equation*}
The condition \eqref{eq:partialB_neq} implies that $C_{00,01}\ne 0$ only if $U_B$ is diagonal, in which case $U_B=\I$ and
\begin{equation*}
C_{00,01} = \sum_{i\ne j} c^{00,01}_{ij} |i\>\<j|\otimes |i\>\<j|.
\end{equation*}
Similarly, we can have $C_{11,01}\ne 0$ only if $U_B=V$, in which case
\begin{equation*}
C_{11,01} = \sum_{i\ne j} c^{11,01}_{ij} U_A|i\>\<j|U_A^*\otimes |i\>\<j|.
\end{equation*}
It follows that if also $U_B\ne \I$, $U_B\ne V$, then $C=\sum_\alpha C_{\alpha,\alpha}\otimes |\alpha\>\<\alpha|$ is block-diagonal, which means that is has the form \eqref{eq:PRform} with $r=1$. If $U_B=\I$ then we have $\Delta_1=\{00,01\}$ and $V_{\out}=U_{00}=U_{01}=W$, for $U_B=V$ we obtain $\Delta_1=\{10,11\}$ and $V_{\out}=U_{10}=U_{11}=(U_A\otimes\I)W$. We may apply similar reasoning in all cases with $U_B\ne \I$, $U_B\ne V$.

Next we turn to the cases when $U_A,U_B\in \{\I,V\}$. We will provide a proof for $U_A=U_B=\I$, all other cases are similar. Here we obtain from \eqref{eq:partialA_neq} - \eqref{eq:partialB_eq} that all off-diagonal blocks $C_{\alpha,\beta}$ must have $c_{ii}^{\alpha,\beta}=0$ and $C_{km,ln}=0$ whenever $k.m\ne l.n$, which implies that $C_{\alpha,11}=0$ for all $\alpha\ne 11$. Note that $U_{00}=U_{10}=U_{01}=W$, so that $C$ has the form \eqref{eq:PRform} with $\Delta_0=\{11\}$ and $V_{\out}=W$. All possible forms of $C$ in the different cases are summarized in table \ref{tbl:PRproof-C}.

\end{document}